\theoremstyle{plain}
\numberwithin{equation}{section}
\newtheorem{thm}{Theorem}[section]
\newtheorem{lem}[thm]{Lemma}
\newtheorem{cor}[thm]{Corollary}
\newenvironment{exam}[1]
{\begin{flushleft}\textbf{Example #1}.\enspace}%
{\end{flushleft}}
\newcommand{\real}{{\mathbb R}}
\newcommand{\tbullet}{\mathrel{\raise .4ex\hbox{\tiny$\bullet$}}} 
\newcommand{\trace}{tr}
\newcommand{\rmcor}{\mathrm{Cor}}
\newcommand{\ityes}{\textit{yes}}
\newcommand{\itno}{\textit{no}}
\newcommand{\rmre}{\mathrm{Re}}
\newcommand{\escript}{\mathcal{E}}
\newcommand{\hscript}{\mathcal{H}}
\newcommand{\iscript}{\mathcal{I}}
\newcommand{\jscript}{\mathcal{J}}
\newcommand{\lscript}{\mathcal{L}}
\newcommand{\sscript}{\mathcal{S}}
\newcommand{\atilde}{\widetilde{A}}
\newcommand{\btilde}{\widetilde{B}}
\newcommand{\ctilde}{\widetilde{C}}
\newcommand{\rhotilde}{\widetilde{\rho}}
\newcommand{\iscriptbar}{\overline{\iscript}}
\newcommand{\jscriptbar}{\overline{\jscript}}
\newcommand{\ab}[1]{\left|#1\right|}
\newcommand{\brac}[1]{\left\{#1\right\}}
\newcommand{\paren}[1]{\left(#1\right)}
\newcommand{\sqbrac}[1]{\left[#1\right]}
\newcommand{\elbows}[1]{{\left\langle#1\right\rangle}}
\newcommand{\ket}[1]{{\left|#1\right>}}
\newcommand{\bra}[1]{{\left<#1\right|}}
\begin{document}

\title{CONDITIONED EFFECTS,\\OBSERVABLES AND INSTRUMENTS}
\author{Stan Gudder\\ Department of Mathematics\\
University of Denver\\ Denver, Colorado 80208\\
sgudder@du.edu}
\date{}
\maketitle

\begin{abstract}
We begin with a study of operations and the effects they measure. We define the probability that an effect $a$ occurs when the system is in a state $\rho$ by
$P_\rho (a)=\trace (\rho a)$. If $P_\rho (a)\ne 0$ and $\iscript$ is an operation that measures $a$, we define the conditional probability of an effect $b$ given $a$ relative to $\iscript$ by
\begin{equation*}
P_\rho (b\mid a)=\trace\sqbrac{\iscript (\rho )b}/P_\rho (a)
\end{equation*}
We characterize when Bayes' quantum second rule
\begin{equation*}
P_\rho (b\mid a)=\frac{P_\rho (b)}{P_\rho (a)}\,P_\rho (a\mid b)
\end{equation*}
holds. We then consider L\"uders and Holevo operations. We next discuss instruments and the observables they measure. If $A$ and $B$ are observables and an instrument $\iscript$ measures $A$, we define the observable $B$ conditioned on $A$ relative to $\iscript$ and denote it by $(B\mid A)$. Using these concepts, we introduce Bayes' quantum first rule. We observe that this is the same as the classical Bayes' first rule, except it depends on the instrument used to measure $A$. We then extend this to Bayes' quantum first rule for expectations. We show that two observables $B$ and $C$ are jointly commuting if and only if there exists an atomic observable $A$ such that $B=(B\mid A)$ and $C=(C\mid A)$. We next obtain a general uncertainty principle for conditioned observables. Finally, we discuss observable conditioned quantum entropies. The theory is illustrated with many examples.
\end{abstract}

\section{Effects and Operations}  
It is sometimes stated that all probabilities in quantum mechanics are conditional probabilities and there is some sense to this statement. Underlying most quantum experiments or observations, there are basic observables $A_i$ and calculations are performed according to the outcomes obtained for $A_i$. For example, many quantum experiments consist of scattered particles and these involve the positions $A_i$ of the various particles. The probabilities for another observable is thus conditioned by the outcomes of $A_i$.

According to complexity, there is a hierarchy of quantum measurements. The simplest are effects, the next are observables and finally we have instruments. Each of these types of measurements can be conditioned in a systematic way. They can even be conditioned among each other.

Let $H$ be a finite-dimensional complex Hilbert space representing a quantum system. The set of linear operators on $H$ is denoted by $\lscript (H)$ and the set of self-adjoint operators is denoted by $\lscript _S(H)$. A \textit{state} is a positive operator $\rho\in\lscript _S(H)$ with trace $\trace (\rho )=1$ and the set of states is denoted by $\sscript (H)$. States describe the conditions of the system and are employed to compute probabilities of measurement outcomes. An operator $a$ satisfying $0\le a\le I$ is called an \textit{effect}. An effect represents a two outcome \ityes-\itno\ experiment that either occurs or does not occur
\cite{bgl95,dl70,hz12,kra83,nc00}. We represent the set of effects by $\escript (H)$. If $a\in\escript (H)$ occurs, then its \textit{complement} $a'=I-a$ does not occur.
An \textit{operation} is a completely positive linear map $\iscript\colon\lscript (H)\to\lscript (H)$ such that $\trace\sqbrac{\iscript (\rho )}\le\trace (\rho )$ for all
$\rho\in\sscript (H)$ \cite{bgl95,dl70,hz12,kra83,nc00}. An operation that satisfies $\trace\sqbrac{\iscript (\rho )}=\trace (\rho )$ for all $\rho\in\sscript (H)$ is called a
\textit{channel} \cite{hz12,hol94,nc00}. Any operation $\iscript$ has a \textit{Kraus decomposition} $\iscript (A)=\sum\limits _iK_iAK_i^*$ where $K_i\in\lscript (H)$ and
$\sum\limits _iK_i^*K_i\le I$. We call $K_i$, $i=1,2,\ldots ,n$, \textit{Kraus operators} for $\iscript$ \cite{kra83}. When $\iscript$ is a channel, we have
$\sum\limits _iK_i^*K_i=I$.

Corresponding to an operation $\iscript$ we have the \textit{dual operation} \cite{gud21,gud223,gud320} $\iscript ^*\colon\lscript (H)\to\lscript (H)$ where $\iscript ^*$ is linear and satisfies $\trace\sqbrac{\iscript (\rho )A}=\trace\sqbrac{\rho\iscript ^*(A)}$ for all $\rho\in\sscript (H)$, $A\in\lscript (H)$. If $\iscript$ has Kraus decomposition
$\iscript (A)=\sum K_iAK_i^*$, then $\iscript ^*(A)=\sum K_i^*AK_i$ for all $A\in\lscript (H)$. If $\iscript$ is a channel then $\iscript ^*(I)=I$. It is easy to check that if
$\iscript$ is an operation, then $\iscript ^*\colon\escript (H)\to\escript (H)$ and $\iscript ^*(a)\le a$ for all $a\in\escript (H)$. We say that an operation $\iscript$
\textit{measures an effect} $a$ if $\trace\sqbrac{\iscript (\rho )}=\trace (\rho a)$ for all $\rho\in\iscript (H)$ \cite{gud220,gud21,gud223}. We interpret
$P_\rho (a)=\trace (\rho a)$ as the probability that $a$ occurs when the system is in state $\rho$. It follows that an operation measures a unique effect. However, as we shall see, there are many operations that measure an effect $a$. If $\iscript$ measures $a$, then
\begin{equation*}
\trace\sqbrac{\rho\iscript ^*(I)}=\trace\sqbrac{\iscript (\rho )}=\trace (\rho a)
\end{equation*}
for every $\rho\in\sscript (H)$. Hence, $\iscript$ measures $a$ if and only if $\iscript ^*(I)=a$.

If $a,b\in\escript (H)$ we write $a\perp b$ if $a+b\in\escript (H)$. If $a,b\in\escript (H)$ and $\iscript$ measures $a$, we define the $\iscript$-\textit{sequential product of}
$a$ \textit{then} $b$ by $a\sqbrac{\iscript}b=\iscript ^*(b)$. It is easy to check that $a\sqbrac{\iscript}b\le a$, if $b\perp c$ then
$a\sqbrac{\iscript}(b+c)=a\sqbrac{\iscript}b+a\sqbrac{\iscript}c$ and $a\sqbrac{\iscript}I=a$ \cite{gud21,gud223}. An effect $a$ is \textit{sharp} if $a$ is a projection and $a$ is \textit{atomic} if $a$ is a one-dimensional projection.

\begin{lem}    
\label{lem11}
Let $\iscript$ be an operation that measures $a\in\escript (H)$.
{\rm{(i)}}\enspace $\iscript ^*(b)\le a$ for all $b\in\escript (H)$.
{\rm{(ii)}}\enspace If $a$ is sharp, then $\iscript ^*(b)a=a\iscript ^*(b)$ for all $b\in\escript (H)$.
{\rm{(iii)}}\enspace If $a$ is atomic, then $\iscript ^*(b)=\lambda a$ for some $\lambda\in\sqbrac{0,1}$.
\end{lem}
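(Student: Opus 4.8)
The plan is to handle the three parts in sequence, with each resting on the single structural fact that $\iscript$ measures $a$ is equivalent to $\iscript ^*(I)=a$, established just before the lemma. For part (i) I would first record that the dual $\iscript ^*$ is positive and therefore order-preserving: writing $\iscript ^*(A)=\sum _iK_i^*AK_i$ in terms of Kraus operators, any $A\ge 0$ is sent to $\sum _iK_i^*AK_i\ge 0$, so $b\le c$ forces $\iscript ^*(b)\le\iscript ^*(c)$. Since every effect satisfies $0\le b\le I$, this yields $0\le\iscript ^*(b)\le\iscript ^*(I)=a$, which is exactly (i). (I would avoid leaning on the earlier inline claim $\iscript ^*(a)\le a$ and instead use $\iscript ^*(I)=a$ together with monotonicity, which is the clean route here.)

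For part (ii) the assertion is that $\iscript ^*(b)$ commutes with $a$ when $a$ is a projection, and this is where the real work lies. The crux—and the step I expect to be the main obstacle—is the operator fact that $0\le x\le p$ with $p$ a projection forces $x=pxp$, and in particular $xp=px=x$. I would prove this by a squeezing argument: setting $p'=I-p$, from $x\le p$ we get $0\le p'xp'\le p'pp'=0$, hence $p'xp'=0$; writing $x=(x^{1/2})^*x^{1/2}$ this reads $(x^{1/2}p')^*(x^{1/2}p')=0$, so $x^{1/2}p'=0$, i.e. $x^{1/2}=x^{1/2}p$. Taking adjoints gives $x^{1/2}=px^{1/2}$, and multiplying these yields $px=xp=x$. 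Applying the fact with $x=\iscript ^*(b)\ge 0$ and $p=a$, using $\iscript ^*(b)\le a$ from part (i), gives $\iscript ^*(b)a=a\iscript ^*(b)$.

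For part (iii) I would observe that an atomic $a$ is a (one-dimensional) projection, so the squeezing fact of part (ii) applies and gives the stronger conclusion $\iscript ^*(b)=a\iscript ^*(b)a$. Writing $a=\ket{\psi}\bra{\psi}$ for a unit vector $\psi$, I compute $a\iscript ^*(b)a=\bra{\psi}\iscript ^*(b)\ket{\psi}\,\ket{\psi}\bra{\psi}=\lambda a$ with $\lambda=\bra{\psi}\iscript ^*(b)\ket{\psi}$. The range $\lambda\in\sqbrac{0,1}$ then follows from $0\le\iscript ^*(b)\le a$: positivity of $\iscript ^*(b)$ gives $\lambda\ge 0$, while $\iscript ^*(b)\le a$ gives $\lambda=\bra{\psi}\iscript ^*(b)\ket{\psi}\le\bra{\psi}a\ket{\psi}=1$. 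Since parts (i) and (iii) are short consequences of the monotonicity observation and the squeezing lemma, I would state and prove that lemma once (inside the argument for (ii)) and reuse it in (iii).
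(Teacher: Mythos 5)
Your proof is correct, and it differs from the paper's mainly in being self-contained. For (i) the two arguments coincide in substance: the paper writes $\iscript ^*(b)+\iscript ^*(b')=\iscript ^*(I)=a$ and discards the positive term $\iscript ^*(b')$, which is exactly your monotonicity observation applied to $I-b$. The genuine divergence is in (ii) and (iii), which the paper settles by citation to the literature: $\iscript ^*(b)\le a$ makes $\iscript ^*(b)$ and $a$ coexist, sharpness of $a$ then forces commutation, and an effect below an atomic projection is a scalar multiple of it. You instead prove the single operator-theoretic fact behind both statements: if $0\le x\le p$ with $p$ a projection, then conjugating by $p'=I-p$ gives $0\le p'xp'\le p'pp'=0$, so $x^{1/2}p'=0$ (since $T^*T=0$ forces $T=0$), hence $x^{1/2}=x^{1/2}p=px^{1/2}$ and therefore $x=pxp=xp=px$. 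This squeezing lemma is correct and in fact yields slightly more than (ii) asks: the absorption identities $a\iscript ^*(b)=\iscript ^*(b)a=\iscript ^*(b)$, not merely commutation. Your (iii) then follows by computing $a\iscript ^*(b)a=\elbows{\psi ,\iscript ^*(b)\psi}\ket{\psi}\bra{\psi}$, with the bounds $0\le\lambda\le 1$ coming from $0\le\iscript ^*(b)\le a$; this also makes the constant $\lambda$ explicit, where the cited result leaves it implicit. What each approach buys: the paper's citations keep the proof to a few lines and anchor the lemma in standard effect-algebra theory, while yours is verifiable without outside references and isolates a reusable operator inequality that does the work in both (ii) and (iii).
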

\begin{proof}
(i)\enspace Since
\begin{equation*}
\iscript ^*(b)+\iscript ^*(b')=\iscript ^*(b+b')=\iscript ^*(I)=a
\end{equation*}
we conclude that $\iscript ^*(b)\le a$.
(ii)\enspace Since $\iscript ^*(b)\le a$, $\iscript ^*(b)$ and $a$ coexist \cite{hz12}. Then $a$ being sharp implies that $\iscript ^*(b)a=a\iscript ^*(b)$.
(iii)\enspace If $a$ is atomic and $\iscript ^*(b)\le a$, we have that $\iscript ^*(b)=\lambda a$ for some $\lambda\in\sqbrac{0,1}$ \cite{hz12}.
\end{proof}

If $P_\rho (a)\ne 0$ and $\iscript$ measures $a$, we define the \textit{conditional probability of} $b$ \textit{given} $a$ \textit{relative to} $\iscript$ by \cite{gud320}
\begin{equation*}
P_\rho (b\mid a)=\frac{\trace\sqbrac{\iscript (\rho )b}}{P_\rho (a)}
\end{equation*}
We then have
\begin{align*}
P_\rho (b\mid a)&=\frac{\trace\sqbrac{\iscript (\rho )b}}{\trace\sqbrac{\iscript (\rho )}}=\frac{\trace\sqbrac{\rho\iscript ^*(b)}}{\trace (\rho a)}
   =\frac{\trace\paren{\rho a\sqbrac{\iscript}b}}{\trace (\rho a)}=\frac{P_\rho\paren{a\sqbrac{\iscript}b}}{P_\rho (a)}\\
   \noalign{\smallskip}
   &=\frac{P_{\iscript (\rho )}(b)}{P_\rho (a)}
\end{align*}
We have that $b\mapsto P_\rho (b\mid a)$ is a probability distribution in the sense that $P_\rho (I\mid a)=1$ and if $b_i\in\escript (H)$ with
$b_1+b_2+\cdots +b_n\le I$, then
\begin{equation*}
P_\rho\paren{\sum _{i=1}^nb_i\mid a}=\sum _{i=1}^nP_\rho (b_i\mid a)
\end{equation*}
We also see that $\rhotilde =\iscript (\rho )/P_\rho (a)$ is a state called the \textit{updated state} for $\iscript$ and we have
\begin{equation*}
P_\rho (b\mid a)=\trace (\,\rhotilde b)=P_{\rhotilde}(b)
\end{equation*}
Thus, to find $P_\rho (b\mid a)$ we first measure $a$ using $\iscript$, update the state to $\rhotilde$ and then compute the probability of $b$ using $\rhotilde$. If
$\iscript$ and $\jscript$ are operations, we define the \textit{sequential product of} $\iscript$ \textit{then} $\jscript$ as the operation given by 
$(\iscript\circ\jscript )(\rho )=\jscript\paren{\iscript (\rho )}$ for all $\rho\in\sscript (H)$ \cite{gud21,gud223}. In a similar way we define
$(\iscript ^*\circ\jscript ^*)(A)=\jscript ^*\paren{\iscript ^*(A)}$.

\begin{thm}    
\label{thm12}
Let $\iscript$ and $\jscript$ be operations.
{\rm{(i)}}\enspace $(\iscript\circ\jscript )^*=(\jscript ^*\circ\iscript ^*)$.
{\rm{(ii)}}\enspace If $\iscript$ measures $a$ and $\jscript$ measures $b$, the $\iscript\circ\jscript$ measures $a\sqbrac{\iscript}b$.
{\rm{(iii)}}\enspace If $a$ is measured with $\iscript$, $b$ with $\jscript$ and $a\sqbrac{\iscript}b$ with $\iscript\circ\jscript$, then
\begin{equation*}
a\sqbrac{\iscript}\paren{b\sqbrac{\jscript}c}=\paren{a\sqbrac{\iscript}b}\sqbrac{\iscript\circ\jscript}c
\end{equation*}
{\rm{(iv)}}\enspace For all $\rho\in\sscript (H)$ we have
\begin{equation*}
\trace (\rho a)P_\rho\paren{b\sqbrac{\jscript}c\mid a}=\trace\paren{\rho a\sqbrac{\iscript}b}P_\rho\paren{c\mid a\sqbrac{\iscript}b}
\end{equation*}
\end{thm}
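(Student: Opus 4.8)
The plan is to treat the four parts as a single algebraic cascade: each rests on the adjoint relation $\trace\sqbrac{\iscript(\rho)A}=\trace\sqbrac{\rho\iscript^*(A)}$ and the criterion established above that an operation $\iscript$ measures an effect $c$ precisely when $\iscript^*(I)=c$, and each later part consumes the earlier ones. Nothing here requires the Kraus form or positivity beyond what is already packaged into these two facts.

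For (i) I would push the adjoint through the composition one factor at a time. Fixing $\rho\in\sscript (H)$ and $A\in\lscript (H)$,
\[
\trace\sqbrac{(\iscript\circ\jscript)(\rho)A}=\trace\sqbrac{\jscript\paren{\iscript(\rho)}A}=\trace\sqbrac{\iscript(\rho)\jscript^*(A)}=\trace\sqbrac{\rho\,\iscript^*\paren{\jscript^*(A)}},
\]
and the inner composite is $(\jscript^*\circ\iscript^*)(A)$ by the stated convention; since this holds for every $\rho$, uniqueness of the dual gives $(\iscript\circ\jscript)^*=\jscript^*\circ\iscript^*$. Part (ii) is then immediate from the measurement criterion: $\iscript\circ\jscript$ measures $a\sqbrac{\iscript}b$ iff $(\iscript\circ\jscript)^*(I)=a\sqbrac{\iscript}b$, and by (i) together with $\jscript^*(I)=b$ we get $(\iscript\circ\jscript)^*(I)=\iscript^*\paren{\jscript^*(I)}=\iscript^*(b)=a\sqbrac{\iscript}b$.

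For (iii) I would reduce both sides to the same adjoint expression. The left side unfolds as $a\sqbrac{\iscript}\paren{b\sqbrac{\jscript}c}=\iscript^*\paren{\jscript^*(c)}$, while the right side, using (ii) to certify that $\iscript\circ\jscript$ is the operation that measures $a\sqbrac{\iscript}b$, equals $(\iscript\circ\jscript)^*(c)=\iscript^*\paren{\jscript^*(c)}$ by (i); the two agree. For (iv) I would expand each conditional probability through the text's formula $P_\rho(b\mid a)=\trace\paren{\rho\,a\sqbrac{\iscript}b}/\trace(\rho a)$, taking care that the conditioning on $a$ is performed with $\iscript$ while the conditioning on $a\sqbrac{\iscript}b$ is performed with $\iscript\circ\jscript$ as supplied by (ii). The prefactors $\trace(\rho a)$ and $\trace\paren{\rho\,a\sqbrac{\iscript}b}$ then cancel the respective denominators, reducing the left side to $\trace\paren{\rho\,a\sqbrac{\iscript}\paren{b\sqbrac{\jscript}c}}$ and the right side to $\trace\paren{\rho\,\paren{a\sqbrac{\iscript}b}\sqbrac{\iscript\circ\jscript}c}$, whereupon (iii) closes the equality.

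I expect no genuine obstacle, since every step is a definitional rewrite; the only place demanding attention is the bookkeeping of composition order, namely reconciling the convention $(\iscript^*\circ\jscript^*)(A)=\jscript^*\paren{\iscript^*(A)}$ with the order reversal in the dual, and, in (iv), making sure each conditional is expanded relative to its correct measuring operation rather than a single fixed one.
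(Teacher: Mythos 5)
Your proof is correct and takes essentially the same approach as the paper's: every part is the same definitional rewrite using trace duality $\trace\sqbrac{\iscript (\rho )A}=\trace\sqbrac{\rho\iscript ^*(A)}$ and the criterion that $\iscript$ measures $a$ iff $\iscript ^*(I)=a$. The only cosmetic difference is bookkeeping: in (ii) you apply (i) to $I$ rather than computing $\trace\sqbrac{(\iscript\circ\jscript )(\rho )}$ directly, and in (iv) you cancel the prefactors and invoke (iii), where the paper unwinds both sides to the common expression $\trace\sqbrac{\iscript (\rho )\jscript ^*(c)}$ --- the same steps in a slightly different order.
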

\begin{proof}
(i)\enspace For all $\rho\in\sscript (H)$ $A\in\lscript (H)$ we obtain
\begin{align*}
\trace\sqbrac{\rho (\iscript\circ\jscript )^*(A)}&=\trace\sqbrac{(\iscript\circ\jscript )(\rho )A}=\trace\sqbrac{\jscript\paren{\iscript (\rho )}A}
   =\trace\sqbrac{\iscript (\rho )\jscript ^*(A)}\\
   &=\trace\sqbrac{\rho\iscript ^*\paren{\jscript ^*(A)}}=\trace\sqbrac{\rho (\jscript ^*\circ\iscript ^*)(A)}
\end{align*}
It follows that $(\iscript\circ\jscript )^*=\jscript ^*\circ\iscript ^*$.
(ii)\enspace Since
\begin{equation*}
\trace\sqbrac{\iscript\circ\jscript (\rho )}=\trace\sqbrac{\jscript\paren{\iscript (\rho )}}=\trace\sqbrac{\iscript (\rho )b}=\trace\sqbrac{\rho\iscript ^*(b)}
   =\trace\paren{\rho a\sqbrac{\iscript}b}
\end{equation*}
it follows that $\iscript\circ\jscript$ measures $a\sqbrac{\iscript}b$.
(iii)\enspace Applying (i) gives
\begin{align*}
a\sqbrac{\iscript}\paren{b\sqbrac{\jscript}c}&=a\sqbrac{\iscript}\paren{\jscript ^*(c)}=\iscript ^*\paren{\jscript ^*(c)}=\jscript ^*\circ\iscript ^*(c)\\
   &=(\iscript\circ\jscript )^*(c)=\paren{a\sqbrac{\iscript}b}\sqbrac{\iscript\circ\jscript}c
\end{align*}
(iv)\enspace This follows from
\begin{align*}
\trace (\rho a)P_\rho\paren{b\sqbrac{\jscript}c\mid a}&=\trace (\rho a)\frac{\trace\paren{\iscript (\rho )b\sqbrac{\jscript}c}}{\trace (\rho a)}
   =\trace\sqbrac{\iscript (\rho )\jscript ^*(c)}\\
   &=\trace\sqbrac{\jscript\paren{\iscript (\rho )}c}=\trace\sqbrac{(\iscript\circ\jscript )(\rho )c}]\\
   &=\trace\sqbrac{\rho (\iscript\circ\jscript )^*}=P_\rho\paren{a\sqbrac{\iscript}b}P_\rho\paren{c\mid a\sqbrac{\iscript}b}\qedhere
\end{align*}
\end{proof}

\textit{Bayes' second rule} says that
\begin{equation}                
\label{eq11}
P_\rho (b\mid a)=\frac{P_\rho (b)}{P_\rho (a)}\,P_\rho (a\mid b)
\end{equation}
The following lemma shows that this result does not always hold.

\begin{lem}    
\label{lem13}
The following statements are equivalent.
{\rm{(i)}}\enspace Equation~\eqref{eq11} holds.
{\rm{(ii)}}\enspace Whenever $\iscript$ measures $a$ and $\jscript$ measures $b$, then
\begin{equation*}
\trace\paren{\rho a\sqbrac{\iscript}b}=\trace\paren{\rho b\sqbrac{\jscript}a}
\end{equation*}
{\rm{(iii)}}\enspace Whenever $\iscript$ measures $a$ and $\jscript$ measures $b$, then $\trace\sqbrac{\iscript (\rho )b}=\trace\sqbrac{\jscript (\rho )a}$.
\end{lem}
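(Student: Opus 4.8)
The plan is to show that, after unwinding the definitions, all three statements reduce to a single trace identity, so that the equivalences become reversible algebra. I would prove (i)$\Leftrightarrow$(ii) and (ii)$\Leftrightarrow$(iii) directly, rather than chasing a cyclic implication.

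First I would use the formula for the conditional probability already derived above,
\[
P_\rho (b\mid a)=\frac{\trace\paren{\rho\, a\sqbrac{\iscript}b}}{\trace (\rho a)}=\frac{P_\rho\paren{a\sqbrac{\iscript}b}}{P_\rho (a)},
\]
valid whenever $\iscript$ measures $a$ and $P_\rho (a)\ne 0$, together with the symmetric expression $P_\rho (a\mid b)=P_\rho\paren{b\sqbrac{\jscript}a}/P_\rho (b)$ for $\jscript$ measuring $b$. Substituting these into the right-hand side of \eqref{eq11}, the factor $P_\rho (b)=\trace (\rho b)$ cancels against the denominator of $P_\rho (a\mid b)$; this cancellation is legitimate because $P_\rho (a\mid b)$ being defined forces $P_\rho (b)\ne 0$. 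What remains is $P_\rho\paren{b\sqbrac{\jscript}a}/P_\rho (a)$. Comparing with the left-hand side and clearing the common nonzero factor $P_\rho (a)$ shows that \eqref{eq11} is equivalent to $\trace\paren{\rho\, a\sqbrac{\iscript}b}=\trace\paren{\rho\, b\sqbrac{\jscript}a}$. Since every step is a biconditional, this yields (i)$\Leftrightarrow$(ii).

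For (ii)$\Leftrightarrow$(iii) I would invoke the identity $a\sqbrac{\iscript}b=\iscript ^*(b)$ from the definition of the sequential product together with the defining property of the dual operation, $\trace\sqbrac{\rho\,\iscript ^*(b)}=\trace\sqbrac{\iscript (\rho )b}$. These rewrite $\trace\paren{\rho\, a\sqbrac{\iscript}b}$ as $\trace\sqbrac{\iscript (\rho )b}$ and, symmetrically, $\trace\paren{\rho\, b\sqbrac{\jscript}a}$ as $\trace\sqbrac{\jscript (\rho )a}$, so (ii) and (iii) are literally the same statement written in two notations.

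The only point requiring genuine care is the handling of the universal quantifiers. The conditional-probability notation in \eqref{eq11} silently depends on a choice of operation $\iscript$ measuring $a$ and an operation $\jscript$ measuring $b$, and a single effect can be measured by many distinct operations. Thus ``Equation~\eqref{eq11} holds'' must be read as holding for every admissible pair $(\iscript ,\jscript )$ and every state $\rho$ with $P_\rho (a),P_\rho (b)\ne 0$, so that it lines up with the explicit ``whenever'' in (ii) and (iii). Because each reduction above is a biconditional, the quantifier ``for all $\iscript ,\jscript ,\rho$'' is preserved at every stage and no delicate interchange is needed; the chief task is simply to phrase the quantification so that the three ``holds''/``whenever'' clauses match.
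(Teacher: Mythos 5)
Your proof is correct and takes essentially the same approach as the paper's: both arguments rest on the identities $a\sqbrac{\iscript}b=\iscript ^*(b)$ and $\trace\sqbrac{\rho\,\iscript ^*(b)}=\trace\sqbrac{\iscript (\rho )b}$ together with cancellation of the nonzero factors $P_\rho (a)$ and $P_\rho (b)$. The only differences are organizational — you prove the two biconditionals (i)$\Leftrightarrow$(ii) and (ii)$\Leftrightarrow$(iii) where the paper runs the cycle (i)$\Rightarrow$(ii)$\Rightarrow$(iii)$\Rightarrow$(i) — and your explicit handling of the quantification over $\iscript$, $\jscript$ and $\rho$, which the paper leaves implicit.
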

\begin{proof}
(i)$\Rightarrow$(ii) If (i) holds, then
\begin{align*}
\trace\paren{\rho a\sqbrac{\iscript}b}&=\trace\sqbrac{\rho\iscript ^*(b)}=\trace\sqbrac{\iscript (\rho )b}=P_\rho (a)P_\rho (b\mid a)=P_\rho (b)P_\rho (a\mid b)\\
   &=\trace\sqbrac{\jscript (\rho )a}=\trace\sqbrac{\rho\jscript ^*(a)}=\trace\paren{\rho b\sqbrac{\jscript}a}
\end{align*}
Hence, (ii) holds. (ii)$\Rightarrow$(iii). If (ii) holds, then
\begin{align*}
\trace\sqbrac{\iscript (\rho )b}&=\trace\sqbrac{b\iscript ^*(b)}=\trace\paren{\rho a\sqbrac{\iscript}b}=\trace\paren{\rho b\sqbrac{\jscript}a}\\
   &=\trace\sqbrac{\rho\jscript ^*(a)}=\trace\sqbrac{\jscript (\rho )a}
\end{align*}
Hence, (iii) holds. (iii)$\Rightarrow$(i) If (iii) holds then
\begin{equation*}
P_\rho (b\mid a)=\frac{\trace\sqbrac{\iscript (\rho )b}}{P_\rho (a)}=\frac{\trace\sqbrac{\jscript (\rho )a}}{P_\rho (a)}=\frac{P_\rho (b)\trace (a\mid b)}{P_\rho (a)}
\end{equation*}
Hence, (i) holds.
\end{proof}

\begin{cor}    
\label{cor14}
.If $\iscript$ measures $a$ and $\jscript$ measures $b$, then the following statements are equivalent.
(i)\enspace \eqref{eq11} holds for every $\rho\in\sscript (H)$.
(ii)\enspace $a\sqbrac{\iscript}b=b\sqbrac{\jscript}a$.
(iii)\enspace $\iscript ^*(b)=\jscript ^*(a)$.
\end{cor}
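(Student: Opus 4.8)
The plan is to dispose of (ii) $\Leftrightarrow$ (iii) by inspection and to concentrate all the work on (i) $\Leftrightarrow$ (iii). By the very definition of the sequential product, $a\sqbrac{\iscript}b=\iscript ^*(b)$ and $b\sqbrac{\jscript}a=\jscript ^*(a)$, so (ii) and (iii) are the identical operator equation $\iscript ^*(b)=\jscript ^*(a)$ written in two notations, and no argument beyond citing the definition is needed.

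To relate these to (i), I would unwind \eqref{eq11} using the formulas already established in the text. Since $P_\rho (b\mid a)=\trace\paren{\rho a\sqbrac{\iscript}b}/P_\rho (a)$ and $P_\rho (a\mid b)=\trace\paren{\rho b\sqbrac{\jscript}a}/P_\rho (b)$, substituting into \eqref{eq11} cancels the $P_\rho (b)$ on the right-hand side, and the surviving factor $P_\rho (a)$ then cancels across the equation, reducing Bayes' second rule to the scalar identity
\begin{equation*}
\trace\sqbrac{\rho\iscript ^*(b)}=\trace\sqbrac{\rho\jscript ^*(a)},
\end{equation*}
valid for exactly those $\rho$ at which both conditional probabilities are defined, namely $P_\rho (a)\ne 0$ and $P_\rho (b)\ne 0$. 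This is the same computation underlying the proof of Lemma~\ref{lem13}, now carried out for the fixed operations $\iscript$ and $\jscript$ rather than universally.

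The one substantive step is to promote this scalar identity, a priori valid only on $D=\brac{\rho:P_\rho (a)\ne 0,\ P_\rho (b)\ne 0}$, to the operator equation (iii). The maps $\rho\mapsto\trace\sqbrac{\rho\iscript ^*(b)}$ and $\rho\mapsto\trace\sqbrac{\rho\jscript ^*(a)}$ are linear, hence continuous, on $\sscript (H)$, and $D$ is dense in $\sscript (H)$ because it contains every full-rank state (for positive definite $\rho$ and nonzero $a$ one has $\trace (\rho a)>0$) and the full-rank states are dense. Continuity then forces the trace identity to hold for all $\rho\in\sscript (H)$; evaluating at pure states $\rho=\ket{\psi}\bra{\psi}$ gives $\bra{\psi}\iscript ^*(b)\ket{\psi}=\bra{\psi}\jscript ^*(a)\ket{\psi}$ for every unit vector $\psi$, and since a self-adjoint operator is determined by its quadratic form this yields $\iscript ^*(b)=\jscript ^*(a)$, which is (iii). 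The reverse implication is immediate: if (iii) holds the scalar identity holds for all $\rho$, and the algebra above then reconstitutes \eqref{eq11} wherever it is defined. I expect this density-and-continuity passage from scalar to operator equality to be the only point requiring genuine care; everything else is a direct rewriting of the definitions already in place.
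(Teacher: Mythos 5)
Your proposal is correct and is essentially the paper's own route: the corollary is just Lemma~\ref{lem13} with its trace identities promoted to operator equalities, your reduction of \eqref{eq11} to $\trace\sqbrac{\rho\iscript ^*(b)}=\trace\sqbrac{\rho\jscript ^*(a)}$ is the same computation carried out there, and your observation that (ii) and (iii) coincide outright follows from the paper's definition $a\sqbrac{\iscript}b=\iscript ^*(b)$. Your density-of-full-rank-states argument soundly supplies the one detail the paper leaves implicit --- that a scalar identity available only where $P_\rho (a)\ne 0$ and $P_\rho (b)\ne 0$ still forces the operator equation --- and the sole case it does not cover, $a=0$ or $b=0$ (where your set $D$ is empty), is trivial anyway, since then $\iscript ^*(b)\le\iscript ^*(I)=a=0$ by Lemma~\ref{lem11}(i), so (iii) holds and (i) is vacuous.
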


\begin{exam}{1}  
For $a\in\escript (H)$ we define the L\"uders operation $\lscript ^{(a)}(\rho )=a^{1/2}\rho a^{1/2}$ \cite{dl70, lud51}. Then
\begin{equation*}
\trace\sqbrac{\lscript ^{(a)}(\rho )}=\trace (a^{1/2}\rho a^{1/2})=\trace (\rho a)
\end{equation*}
for all $\rho\in\sscript (H)$ so $\lscript ^{(a)}$ measures $a$. Notice that $\lscript ^{(a)*}=\lscript ^{(a)}$ for all $a\in\escript (H)$ and
$a\sqbrac{\lscript ^{(a)}}b=a^{1/2}ba^{1/2}$. We call $a\sqbrac{\lscript ^{(a)}}b$ the \textit{standard sequential product of} $a$ \textit{then} $b$
\cite{gg02,gud21}. Relative to $\lscript ^{(a)}$ we have for all $\rho\in\sscript (H)$, $b\in\escript (H)$ that
\begin{equation*}
P_\rho (a\mid b)=\frac{\trace\sqbrac{\lscript ^{(a)}(\rho )b}}{P_\rho (a)}=\frac{\trace (a^{1/2}\rho a^{1/2}b)}{\trace (\rho a)}
   =\frac{\trace (\rho a^{1/2}ba^{1/2})}{\trace{\rho a)}}
\end{equation*}
Applying Corollary~\ref{cor14} we have that Bayes' second rule holds relative to $\lscript ^{(a)}$ and $\lscript ^{(b)}$ for all $\rho\in\sscript (H)$ if and only if
$a^{1/2}ba^{1/2}=b^{1/2}ab^{1/2}$. This is equivalent to $ab=ba$; that is, $a$ and $b$ commute \cite{gg02}. Thus, \eqref{eq11} does not hold, in general. We also have from Theorem~\ref{thm12}(iii) that 
\begin{equation*}
a\sqbrac{\lscript ^a}\paren{b\sqbrac{\lscript {(b)}}c}=\paren{a\sqbrac{\lscript ^{(a)}b}\sqbrac{\lscript ^{(a)}\circ\lscript ^{(b)}}c}=a^{1/2}b^{1/2}cb^{1/2}a^{1/2}
\end{equation*}
It follows from Theorem~\ref{thm12}(ii) that $\lscript ^{(a)}\circ\lscript ^{(b)}$ measures $a\sqbrac{\lscript ^{(a)}}b=a^{1/2}ba^{1/2}$. However,
\begin{align*}
(\lscript ^{(a)}\circ\lscript ^{(b)})(\rho )&=\lscript ^{(b)}(\lscript ^{(a)}(\rho ))=\lscript ^{(b)}(a^{1/2}\rho a^{1/2})=b^{1/2}a^{1/2}\rho a^{1/2}b^{1/2}
\intertext{and}
\lscript ^{(a^{1/2}ba^{1/2})}(\rho )&=(a^{1/2}ba^{1/2})^{1/2}\rho (a^{1/2}ba^{1/2})^{1/2}
\end{align*}
so $\lscript ^{(a)}\circ\lscript ^{(b)}\ne\lscript ^{\paren{a\sqbrac{\lscript ^{(a)}}b}}$. We conclude that
\begin{equation*}
a\sqbrac{\lscript ^{(a)}}\paren{b\sqbrac{\lscript ^{(b)}}c}\ne\paren{a\sqbrac{\lscript ^{(a)}}b}\sqbrac{\lscript ^{\paren{a\sqbrac{\lscript ^{(a)}}b}}}c
\end{equation*}
in general.\hfill\qedsymbol
\end{exam}

\begin{exam}{2}  
If $a\in\escript (H)$, $\alpha\in\sscript (H)$, we define the \textit{Holero operation} \cite{hol82,hol94}
\begin{equation*}
\hscript ^{(a,\alpha )}(\rho )=\trace (\rho a)\alpha
\end{equation*}
Then for every $\rho\in\sscript (H)$, $b\in\escript (H)$ we obtain
\begin{align*}
\trace\sqbrac{\rho\hscript ^{(a.\alpha )*}(b)}&=\trace\sqbrac{\hscript ^{(a,\alpha )}(\rho )b}=\trace\sqbrac{\trace (\rho a)\alpha b}\\
     &=\trace (\rho a)\trace (\alpha b)=\trace\sqbrac{\rho\trace (\alpha b)a}
\end{align*}
Hence,
\begin{equation*}
\hscript ^{(a,\alpha )*}(b)=a\sqbrac{\hscript ^{(a,\alpha )}}b=\trace (\alpha b)a
\end{equation*}
Since $\trace\sqbrac{\hscript ^{(a,\alpha )}(\rho)}=\trace (\rho a)$ we see that $\hscript ^{(a,\alpha )}$ measures $a$. This shows that for any $a\in\escript (H)$, there are many operations that measure $a$. The conditional probability of $b$ given $a$ related to $\hscript ^{(a,\alpha )}$ becomes 
\begin{equation*}
P_\rho (b\mid a)=\frac{\trace\sqbrac{\hscript ^{(a,\alpha )}(\rho )b}}{P_\rho (a)}=\frac{\trace (\rho a)\trace (\alpha b)}{\trace (\rho a)}=\trace (\alpha b)
\end{equation*}
which curiously is independent of $\rho$ and $a$. Applying Corollary~\ref{cor14} we have that Bayes' second rule holds for all $\rho\in\sscript (H)$ relative to
$\hscript ^{(a,\alpha )}$ and $\hscript ^{(b,\beta )}$ if and only if
\begin{equation*}
\trace (\alpha b)a=\trace (\beta a)b
\end{equation*}
If $a$ and $b$ are sharp this is equivalent to $a=b$ and $\trace (\alpha a)=\trace (\beta a)$. Moreover, Theorem~\ref{thm12}(iii) becomes
\begin{align*}
a\sqbrac{\hscript ^{(a,\alpha )}}\paren{b\sqbrac{\hscript ^{(b,\beta )}}c}
   &=\paren{a\sqbrac{\hscript ^{(a,\alpha )}}b}\sqbrac{\hscript ^{(a,\alpha )}\circ\hscript ^{(b,\beta )}}c\\
   &=a\sqbrac{\hscript ^{(a,\alpha )}}\paren{\hscript ^{(b,\beta )*}(c)}=a\sqbrac{\hscript ^{(a,\alpha )}}\paren{\trace (\beta c)b}\\
   &=\trace (\beta c)a\sqbrac{\hscript ^{(a,\alpha )}}b=\trace (\beta c)\hscript ^{(a,\alpha )*}(b)\\
   &=\trace (\beta c)\trace (\alpha b)a
\end{align*}
Unlike the L\"uders operations, we have
\begin{equation*}
\hscript ^{(a,\alpha )}\circ\hscript ^{(b,\beta )}=\hscript ^{\paren{a\sqbrac{\hscript ^{(a,\alpha )}}b,\beta}}
\end{equation*}
Indeed,
\begin{align*}
\hscript ^{(a,\alpha )}\circ\hscript ^{(b,\beta )}(\rho )&=\hscript ^{(b,\beta )}\sqbrac{\hscript ^{(a,\alpha )}(\rho )}=\hscript ^{(b,\beta )}\paren{\trace (\rho a)\alpha}\\
    &=\trace (\rho a)\hscript ^{(b,\beta )}(\alpha )=\trace (\rho a)\trace (\alpha b)\beta\\
    &=\trace\sqbrac{\rho\trace (\alpha b)a}\beta =\hscript ^{\paren{\trace (\alpha b)a,\beta}}(\rho )\\
    &=\hscript ^{\paren{\hscript (a,\alpha )^*(b),\beta}}(\rho )=\hscript ^{\paren{a\sqbrac{\hscript ^{(a,\alpha )}}b,\beta}}(\rho)\hskip 6pc\qedsymbol
\end{align*}
\end{exam}

\section{Observables and Instruments}  
A (finite) \textit{observable} is a collection of effects $A=\brac{A_x\colon x\in\Omega _A}$ on $H$ satisfying $\sum\limits _{x\in\Omega _A}A_x=I$
\cite{bgl95,dl70,hz12,nc00}. We assume that the set $\Omega _A$ is finite and call $\Omega _A$ the \textit{outcome space} for $A$. We think of $A$ as an experiment or measurement and when the outcome $x$ results, then we say that the effect $A_x$ \textit{occurs}. The condition $\sum\limits _{x\in\Omega _A}A_x=I$ means that one of the outcomes occurs when a measurement of $A$ is performed. If $\rho\in\sscript (H)$, then $P_\rho (A_x)=\trace (\rho A_x)$ is the probability that the outcome $x$ results and $A_x$ occurs. We call $A(\Delta )=\sum\brac{A_x\colon x\in\Delta}$, where $\Delta\subseteq\Omega _A$, a \textit{positive operator-valued measure} (POVM). The \textit{probability distribution} of $A$ in the state $\rho$ is the measure given by $\Phi _\rho ^\Delta (\Delta )=\sum\limits _{x\in\Delta}P_\rho (x)$ for all
$\Delta\in\Omega _A$ and we usually write
\begin{equation*}
\Phi _\rho ^\Delta (x)=\Phi _\rho ^\Delta\paren{\brac{x}}=P_\rho (A_x)
\end{equation*}

A (finite) \textit{instrument} is a finite collection of operations $\iscript =\brac{\iscript _x\colon x\in\Omega _\iscript}$ such that
$\iscriptbar=\sum\limits _{x\in\Omega _\iscript}\iscript _x$ is a channel \cite{bgl95,dl70,gud120,hz12,nc00}. Then for all $\rho\in\sscript (H)$ and
$\Delta\subseteq\Omega _\iscript$
\begin{equation*}
\Phi _\rho ^\iscript (\Delta )=\sum\brac{\trace\sqbrac{\iscript _x(\rho )}\colon x\in\Delta}
\end{equation*}
is the probability measure on $\Omega _\iscript$. We say that $\iscript$ \textit{measures} an observable $A$ if for all $\rho\in\sscript (H)$, we have
$\trace\sqbrac{\iscript _x(\rho )}=\trace (\rho A_x)$ for every $x\in\Omega _A$. Clearly, $\iscript$ measures a unique observable and they both have the same probability distribution. As with operations and effects an observable is measured by many instruments. If $\iscript$ is an instrument, its \textit{dual instrument}
$\iscript ^*\colon\lscript (H)\to\lscript (H)$ satisfies \cite{gud220,gud223}
\begin{equation*}
\trace\sqbrac{\rho\iscript _x^*(A)}=\trace\sqbrac{\iscript _x(\rho)A}
\end{equation*}
for all $A\in\lscript (H)$
\begin{equation*}
\iscript _{\Omega _\iscript}^*(I)=\sum _{x\in\Omega _\iscript}\iscript _x^*(I)=I
\end{equation*}
It is easy to check that $\iscript _x^*\colon\escript (H)\to\escript (H)$ and $\iscript _x^*(I)$ is the observable measured by $\iscript$.

If $a\in\escript (H)$ and $A$ is an observable on $H$ measured by the instrument $\iscript$, the effect $a$ \textit{conditioned by} $A$ is the effect
\begin{equation*}
(a\mid A)=\sum _{x\in\Omega _A}\iscript _x^*(a)=\iscript _{\Omega _A}^*(a)=\sum _{x\in\Omega _A}A_x\sqbrac{\iscript _x}a
\end{equation*}
It is clear that $a\mapsto (a\mid A)$ is a morphism in the sense that $(I\mid A)=I$ and if $a_i\in\escript (H)$ with $\sum\limits _{i=1}^na_i\le I$ then
$\paren{\sum\limits _{i=1}^na_i\mid A}=\sum\limits _{i=1}^n (a_i\mid A)$. A \textit{sub-observable} is a finite collection of effects $A=\brac{A_x\colon x\in\Omega _A}$ on $H$ satisfying $\sum\limits _{x\in\Omega _A}A_x\le I$ \cite{gud223}. If $A$ is a sub-observable, then $A$ possesses a unique minimal extension to an observable by adjoining the effect $I-\sum\limits _{x\in\Omega _A}A_x$ to $A$. If $A$ is an observable and $a\in\escript (H)$ is measured by an operator $\iscript$, then $A$
\textit{conditioned by} $a$ is the sub-observable given by $(A\mid a)_x=a\sqbrac{\iscript}A_x$ \cite{gud320}. Notice that we have
$\sum\limits _{x\in\Omega _A}(A\mid a)_x=a\sqbrac{\iscript}I=a$. If $A$ and $B$ are observables on $H$ and $\iscript$ is an instrument that measures $A$, then $B$
\textit{conditioned on} $A$ \textit{relative to} $\iscript$ is the observable \cite{gud320}
\begin{equation*}
(B\mid A)_y=\sum _{x\in\Omega _\iscript}\iscript _x^*(B_y)=\sum _{x\in\Omega _A}A_x\sqbrac{\iscript _x}B_y
\end{equation*}
If $\iscript$ and $\jscript$ are instruments on $H$ we define the instrument $\jscript$ \textit{conditioned by} $\iscript$ as \cite{gud220,gud320}
\begin{equation*}
(\jscript\mid\iscript )_y(\rho )=\sum _{x\in\Omega _\iscript}\jscript _y\paren{\iscript _x(\rho )}=\jscript _y\sqbrac{\,\iscriptbar (\rho )}
\end{equation*}
for all $\rho\in\sscript (H)$, $y\in\Omega _\jscript$. The next result corresponds to Theorem~\ref{thm12}.

\begin{thm}    
\label{thm21}
Suppose $\iscript$ measures $A$ and $\jscript$ measures $B$.
{\rm{(i)}}\enspace $(\jscript\mid\iscript )$ measures $(B\mid A)$.
{\rm{(ii)}}\enspace For any observable $C$ we have $\paren{(C\mid B)\mid A}=\paren{C\mid (B\mid A)}$.
\end{thm}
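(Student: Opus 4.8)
The plan is to handle the two parts in order, since part~(i) is exactly what makes the right-hand side of part~(ii) well-defined: it identifies $(\jscript\mid\iscript)$ as an instrument measuring $(B\mid A)$, so that $\paren{C\mid (B\mid A)}$ may be formed relative to $(\jscript\mid\iscript)$. For~(i), I would verify the defining measurement condition directly. Fixing $y\in\Omega_\jscript$ and $\rho\in\sscript(H)$ and starting from $(\jscript\mid\iscript)_y(\rho)=\jscript_y\sqbrac{\iscriptbar(\rho)}$, I note that $\iscriptbar$ is a channel, so $\iscriptbar(\rho)$ is again a state; since $\jscript$ measures $B$, taking the trace gives $\trace\sqbrac{(\jscript\mid\iscript)_y(\rho)}=\trace\sqbrac{\iscriptbar(\rho)B_y}$. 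Expanding $\iscriptbar=\sum_x\iscript_x$, moving $B_y$ across each dual, and summing yields $\sum_x\trace\sqbrac{\rho\iscript_x^*(B_y)}=\trace\sqbrac{\rho(B\mid A)_y}$, which is precisely the statement that $(\jscript\mid\iscript)$ measures $(B\mid A)$. This is the instrument-level analog of Theorem~\ref{thm12}(ii).

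For~(ii), the strategy is to expand each side to a common expression indexed by $\Omega_\jscript$. The key preliminary is the dual of the conditioned operation: writing $(\jscript\mid\iscript)_y=\iscriptbar\circ\jscript_y$ and applying Theorem~\ref{thm12}(i) gives $(\jscript\mid\iscript)_y^*=\jscript_y^*\circ\iscriptbar^*$, so that $(\jscript\mid\iscript)_y^*(C_z)=\iscriptbar^*\paren{\jscript_y^*(C_z)}$; alternatively this follows from a one-line trace calculation. The right-hand side then unfolds as $\paren{C\mid(B\mid A)}_z=\sum_y(\jscript\mid\iscript)_y^*(C_z)=\sum_y\iscriptbar^*\paren{\jscript_y^*(C_z)}$. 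For the left-hand side I use $(C\mid B)_z=\sum_y\jscript_y^*(C_z)$ and condition it on $A$ relative to $\iscript$, obtaining $\paren{(C\mid B)\mid A}_z=\sum_x\iscript_x^*\paren{(C\mid B)_z}=\sum_x\sum_y\iscript_x^*\paren{\jscript_y^*(C_z)}$. Since $\iscriptbar^*=\sum_x\iscript_x^*$ and each $\iscript_x^*$ is linear, interchanging the two sums collapses the $x$-sum into $\iscriptbar^*$, giving exactly the right-hand expression.

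The main obstacle is bookkeeping rather than analysis: one must keep straight which instrument each conditioning is taken relative to---the inner $(C\mid B)$ uses $\jscript$, the outer conditioning on $A$ uses $\iscript$, while on the right the single conditioning uses the composite $(\jscript\mid\iscript)$---and correctly identify $(\jscript\mid\iscript)_y^*$ before the two expansions can be compared. Once the dual identity and the relation $\iscriptbar^*=\sum_x\iscript_x^*$ are in hand, both sides reduce to $\sum_y\iscriptbar^*\paren{\jscript_y^*(C_z)}$ and the asserted equality is immediate.
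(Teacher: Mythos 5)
Your proposal is correct and takes essentially the same route as the paper: part (i) is the same direct trace computation (you apply the measurement property of $\jscript$ to the genuine state $\iscriptbar (\rho )$ before expanding the sum, whereas the paper expands first and applies it termwise, which is immaterial), and part (ii) rests on the same dual-composition identity from Theorem~\ref{thm12}(i). In (ii) you work componentwise and interchange the two finite sums, while the paper works at the level of the total channels, but both arguments reduce the two sides to the common expression $\sum _{y}\iscriptbar ^*\paren{\jscript _y^*(C_z)}=\iscriptbar ^*\sqbrac{\,\jscriptbar ^*(C_z)}$, so the proofs are the same in substance.
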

\begin{proof}
(i)\enspace For every $\rho\in\sscript (H)$ we have
\begin{align*}  
\trace\sqbrac{(\jscript\mid\iscript )_y(\rho )}&=\trace\sqbrac{\sum _{x\in\Omega _\iscript}\jscript _y\paren{\iscript _x(\rho )}}
    =\sum _{x\in\Omega _\iscript}\trace\sqbrac{\jscript _y\paren{\iscript _x(\rho )}}=\sum _{x\in\Omega _\iscript}\trace\sqbrac{\iscript _x(\rho )B_y}\\
    &=\sum _{x\in\Omega _\iscript}\trace\sqbrac{\rho\iscript _x^*(B_y)}=\trace\sqbrac{\rho\sum _{x\in\Omega _\iscript}\iscript _x^*(B_y)}
    =\trace\sqbrac{\rho (B\mid A)_y}
\end{align*}
It follows that $(\jscript\mid\iscript )$ measures $(B\mid A)$.
(ii)\enspace It follows from (i) that $(\jscript\mid\iscript )$ measures $(B\mid A)$. Then for all $z\in\Omega _C$ we obtain
\begin{align*}
\paren{(C\mid B)\mid A}_z&=\iscriptbar ^*(C\mid B)_z=\iscriptbar ^*\sqbrac{\,\jscriptbar ^*(C_z)}=\jscriptbar ^*\circ\iscriptbar ^*(C_z)\\
   &=\overline{(\iscript\circ\jscript )}\,^*(C_z)=\overline{(\jscript\mid\iscript )^*}(C_z)=\paren{C\mid (B\mid A)}_z
\end{align*}
which gives the result.
\end{proof}

\begin{thm}    
\label{thm22}
If $\iscript$ measures $A$ and $a\in\escript (H)$, then for all $\rho\in\sscript (H)$ we have
\begin{equation}                
\label{eq21}
\sum _{x\in\Omega _A}P_\rho (A_x)P_\rho (a\mid A_x)=P_\rho\sqbrac{(a\mid A)}=P_{\iscriptbar (\rho )}(a)
\end{equation}
\end{thm}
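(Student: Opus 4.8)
The plan is to show that all three quantities in \eqref{eq21} reduce to the single common expression $\sum_{x\in\Omega_A}\trace\sqbrac{\iscript_x(\rho)a}$. I would first dispatch the two right-hand equalities, which are essentially linearity computations, and then handle the left-hand sum, whose only genuine subtlety is the possible vanishing of $P_\rho(A_x)$ for some outcomes $x$. Throughout I use that $\iscript$ measuring $A$ means precisely that each component operation $\iscript_x$ measures the effect $A_x$, so that $\trace\sqbrac{\iscript_x(\rho)}=\trace(\rho A_x)=P_\rho(A_x)$.

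For the middle expression, recall the definition $(a\mid A)=\sum_{x\in\Omega_A}\iscript_x^*(a)$. Then linearity of the trace together with the dual relation $\trace\sqbrac{\rho\iscript_x^*(a)}=\trace\sqbrac{\iscript_x(\rho)a}$ yields
\begin{equation*}
P_\rho\sqbrac{(a\mid A)}=\trace\sqbrac{\rho\sum_{x\in\Omega_A}\iscript_x^*(a)}=\sum_{x\in\Omega_A}\trace\sqbrac{\iscript_x(\rho)a}.
\end{equation*}
For the right-most expression, since $\iscriptbar=\sum_{x\in\Omega_A}\iscript_x$, the same linearity gives $P_{\iscriptbar(\rho)}(a)=\trace\sqbrac{\iscriptbar(\rho)a}=\sum_{x\in\Omega_A}\trace\sqbrac{\iscript_x(\rho)a}$, which matches the middle expression. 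So those two equalities are immediate.

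The left-hand sum needs slightly more care. For each $x$ with $P_\rho(A_x)\ne 0$, the definition of conditional probability relative to $\iscript_x$ gives
\begin{equation*}
P_\rho(A_x)P_\rho(a\mid A_x)=P_\rho(A_x)\,\frac{\trace\sqbrac{\iscript_x(\rho)a}}{P_\rho(A_x)}=\trace\sqbrac{\iscript_x(\rho)a}.
\end{equation*}
The delicate point, which I expect to be the main obstacle, is the case $P_\rho(A_x)=0$, where $P_\rho(a\mid A_x)$ is not even defined. Here I would argue that $\trace\sqbrac{\iscript_x(\rho)}=\trace(\rho A_x)=0$, and since $\iscript_x$ is completely positive the operator $\iscript_x(\rho)$ is positive; a positive operator of trace zero must be the zero operator, so $\iscript_x(\rho)=0$ and hence $\trace\sqbrac{\iscript_x(\rho)a}=0$ as well. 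Consequently such outcomes contribute nothing to $\sum_{x\in\Omega_A}\trace\sqbrac{\iscript_x(\rho)a}$ and are consistently assigned the value $0$ in the left-hand sum.

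Summing over all $x\in\Omega_A$ then gives $\sum_{x\in\Omega_A}P_\rho(A_x)P_\rho(a\mid A_x)=\sum_{x\in\Omega_A}\trace\sqbrac{\iscript_x(\rho)a}$, the same common value obtained for the other two expressions, which completes the proof. In short, the identity is a near-trivial consequence of linearity and the dual relation; the only real bookkeeping is recognizing that the undefined conditional terms correspond exactly to $\iscript_x(\rho)=0$, so that they vanish and may be omitted without affecting the equality.
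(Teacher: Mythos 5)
Your proof is correct and is essentially the paper's own argument: cancel $P_\rho (A_x)$ against the denominator in $P_\rho (a\mid A_x)$, identify the resulting sum with $(a\mid A)=\sum_{x}\iscript _x^*(a)$ via the dual relation, and pass to $P_{\iscriptbar (\rho )}(a)$ by linearity of the trace. The one point where you go beyond the paper is your explicit treatment of outcomes with $P_\rho (A_x)=0$, where the conditional term is undefined and you show $\iscript _x(\rho )=0$ forces the contribution to vanish; the paper's proof silently performs the cancellation without addressing this degenerate case, so your version is the same route with a small but genuine gain in rigor.
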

\begin{proof}
We have that
\begin{align*}
\sum _{x\in\Omega _A}P_\rho (A_x)P_\rho (a\mid A_x)&=\sum _{x\in\Omega _A}\trace (\rho A_x)\frac{\trace\sqbrac{\rho\iscript _x^*(a)}}{\trace (\rho A_x)}
   =\trace\sqbrac{\rho\sum _{x\in\Omega _\iscript}\iscript _x^*(a)}\\
   &=\trace\sqbrac{\rho (a\mid A)}=\trace\sqbrac{\rho\iscript _\Omega ^*(a)}=\trace\sqbrac{\,\iscriptbar (\rho )a}\\
   &=P_{\iscriptbar (\rho )}(a)
\end{align*}
and the result follows.
\end{proof}

We call \eqref{eq21} \textit{Bayes' quantum first rule}. This is the same as the classical Bayes' first rule except it depends on the instrument used to measure $A$.
We then say that \eqref{eq21} is \textit{context dependent} and that $\iscript$ is the \textit{context} in which $A$ is measured. In classical probability theory there is only one context available and no context dependence.

We say that a sub-observable $A$ is \textit{real-valued} if $\Omega _A\subseteq\real$ \cite{gud123}. If $A$ is real-valued and $\rho\in\sscript (H)$ the
$\rho$-\textit{average} (or $\rho$-\textit{expectation}) of $A$ is
\begin{equation*}
E_\rho (A)=\sum _{x\in\Omega _A}xP_\rho (A_x)=\sum _{x\in\Omega _A}x\trace (\rho A_x)
\end{equation*}
If $A$ is real-valued, we define its \textit{stochastic operator} \cite{gud123} to be the self-adjoint operator $\atilde =\sum _{x\in\Omega _A}xA _x$. We then have
\begin{equation*}
E_\rho (A)=\trace\paren{\rho\sum _{x\in\Omega _A}xA_x}=\trace (\rho\atilde\,)
\end{equation*}
which is the expectation of $\atilde$ in the state $\rho$. We also define the \textit{conditional} $\rho$-\textit{average}
\begin{equation*}
E_\rho (A\mid a)=\sum _{x\in\Omega _A}xP_\rho (A_x\mid a)=\sum _{x\in\Omega _A}\frac{x\trace\sqbrac{\rho\iscript ^*(A_x)}}{\trace (\rho a)}
\end{equation*}
where $\iscript$ measures $a$. The next result is called \textit{Bayes' quantum first rule for expectations}.

\begin{thm}    
\label{thm23}
If $\iscript$ measures $A$ and $B$ is a real-valued observable, then
\begin{equation*}
\sum _{x\in\Omega _A}P_\rho (A_x)E_\rho (B\mid A_x)=E_\rho\sqbrac{(B\mid A)}=E_{\iscriptbar (\rho )}(B)
\end{equation*}
\end{thm}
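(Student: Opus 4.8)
The plan is to expand the left-hand sum directly from the definitions and watch the normalizing factors cancel, exactly mirroring the proof of Theorem~\ref{thm22} but now carrying an extra weight $y$ and an extra summation over $\Omega_B$. Write $\iscript=\brac{\iscript_x\colon x\in\Omega_\iscript}$ for the instrument measuring $A$, so that each operation $\iscript_x$ measures the effect $A_x$. First I would record, straight from the definition of the conditional $\rho$-average applied to the effects $B_y$ of $B$, that
\[
E_\rho(B\mid A_x)=\sum_{y\in\Omega_B}yP_\rho(B_y\mid A_x)=\sum_{y\in\Omega_B}\frac{y\trace\sqbrac{\rho\iscript_x^*(B_y)}}{\trace(\rho A_x)}.
\]

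Multiplying by $P_\rho(A_x)=\trace(\rho A_x)$ cancels the denominator, so the left-hand side of the claimed identity becomes $\sum_{x\in\Omega_\iscript}\sum_{y\in\Omega_B}y\trace\sqbrac{\rho\iscript_x^*(B_y)}$. Interchanging the two finite sums and pulling the $x$-summation inside the trace yields $\sum_{y\in\Omega_B}y\trace\sqbrac{\rho\sum_{x\in\Omega_\iscript}\iscript_x^*(B_y)}$. By the definition $(B\mid A)_y=\sum_{x\in\Omega_\iscript}\iscript_x^*(B_y)$, this is exactly $\sum_{y\in\Omega_B}y\trace\sqbrac{\rho(B\mid A)_y}=E_\rho\sqbrac{(B\mid A)}$, since $(B\mid A)$ is an observable sharing the outcome space $\Omega_B$ of $B$. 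This settles the first equality.

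For the second equality I would return to $\sum_{y\in\Omega_B}y\trace\sqbrac{\rho(B\mid A)_y}$, apply the dual relation $\trace\sqbrac{\rho\iscript_x^*(B_y)}=\trace\sqbrac{\iscript_x(\rho)B_y}$ to each summand, and use linearity of the trace to collapse the $x$-sum into $\trace\sqbrac{\iscriptbar(\rho)B_y}$, where $\iscriptbar=\sum_{x\in\Omega_\iscript}\iscript_x$. The result is $\sum_{y\in\Omega_B}y\trace\sqbrac{\iscriptbar(\rho)B_y}=E_{\iscriptbar(\rho)}(B)$, as required.

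I expect no genuine obstacle, since the computation is routine once the definitions are unwound; it is simply the expectation-weighted analogue of Theorem~\ref{thm22}. The only point needing a word of care is that cancelling the normalizing factor tacitly treats any term with $\trace(\rho A_x)=0$ as contributing zero, consistent with the convention already in force for Theorem~\ref{thm22}. This causes no trouble: by Lemma~\ref{lem11}(i) we have $0\le\iscript_x^*(B_y)\le A_x$, so $0\le\trace\sqbrac{\rho\iscript_x^*(B_y)}\le\trace(\rho A_x)=0$ forces $\trace\sqbrac{\rho\iscript_x^*(B_y)}=0$, and hence such a term likewise vanishes in the rearranged expression.
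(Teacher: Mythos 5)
Your proof is correct and follows essentially the same route as the paper's: unwind the definition of $E_\rho(B\mid A_x)$, cancel the factor $P_\rho(A_x)$, interchange the finite sums, pull the $x$-sum inside the trace to recognize $(B\mid A)_y$, and apply the dual relation to obtain $E_{\iscriptbar(\rho)}(B)$. Your closing remark handling the degenerate case $\trace(\rho A_x)=0$ via Lemma~\ref{lem11}(i) is a small but welcome refinement that the paper leaves implicit.
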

\begin{proof}
For all $\rho\in\sscript (H)$, $x\in\Omega _A$ we have
\begin{equation*}
E_\rho (B\mid A_x)=\sum _{y\in\Omega _B}\frac{y\trace\sqbrac{\rho\iscript _x^*(B_y)}}{P_\rho (A_x)}
\end{equation*}
It follows that
\begin{align*}
\sum _{x\in\Omega _A}P_\rho (A_x)E_\rho (B\mid A_x)&=\sum _{x\in\Omega _A}\sum _{y\in\Omega _B}y\trace\sqbrac{\rho\iscript _x^*(B_y)}\\
   &=\sum _{y\in\Omega _B}y\trace\sqbrac{\rho\sum _{x\in\Omega _A}\iscript _x^*(B_y)}
   =\sum _{y\in\Omega _B}y\trace\sqbrac{\,\iscriptbar (\rho )B_y}\\
   &=E_{\iscriptbar (\rho )}(B)=\sum _{y\in\Omega _B}y\trace\sqbrac{\rho (B\mid A)_y}\\
   &=E_\rho\sqbrac{(B\mid A)}\qedhere
\end{align*}
\end{proof}

\begin{exam}{3}  
Let $A$ be the atomic observable
\begin{equation*}
A=\brac{P_x\colon x\in\Omega _A}=\brac{\ket{\phi _x}\bra{\phi _x}\colon x\in\Omega _A}
\end{equation*}
and let $\iscript$ be the instrument
\begin{equation*}
\iscript _x(\rho )=P_x\rho P_x=\elbows{\phi _x,\rho\phi _x}\ket{\phi _x}\bra{\phi _x}
\end{equation*}
that measures $A$. Then
\begin{equation*}
A_x\sqbrac{\iscript _x}a=\iscript _x^*(a)=\elbows{\phi _x,a\phi _x}\ket{\phi _x}\bra{\phi _x}
\end{equation*}
Moreover, if $B=\brac{B_y\colon y\in\Omega _B}$ is an observable on $H$, then $(B\mid P_x)$ is the sub-observable $(B\mid P_x)_y=P_xB_yP_x$ and if
$a\in\escript (H)$, then $(a\mid A)$ is the effect $\iscript _\Omega ^*(a)$. For all $\rho\in\sscript (H)$ we obtain
\begin{align*}
P_\rho (a\mid A)&=P_\rho\sqbrac{\iscript _\Omega ^*(a)}=P_{\iscriptbar (\rho )}(a)=\trace\sqbrac{\sum _{x\in\Omega _A}\elbows{\phi _x,\rho\phi _x}P_xa}\\
   &=\sum _{x\in\Omega _A}\elbows{\phi _x,\rho\phi _x}\elbows{\phi _x,a\phi _x}
\end{align*}
If $B$ is a real-valued observable, we obtain
\begin{equation*}
E_\rho (B\mid A)=E_{\iscriptbar (\rho )}(B)=\trace\sqbrac{\,\iscriptbar (\rho )\btilde\,}=\sum _{x\in\Omega _A}\elbows{\phi _x,\rho\phi _x}\elbows{\phi _x,\btilde\phi _x}
\end{equation*}
Bayes' quantum first rule gives
\begin{equation*}
\sum _{x\in\Omega _A}P_\rho (P_x)P_\rho (a\mid P_x)=P_\rho (a\mid A)=\sum _{x\in\Omega _A}\elbows{\phi _x,\rho\phi _x}\elbows{\phi _x,a\phi _x}
\end{equation*}
and Bayes' quantum first rule for expectations gives
\begin{equation*}
\sum _{x\in\Omega _A}P_\rho (P_x)E_\rho (B\mid P_x)=E_\rho (B\mid A)=\sum _{x\in\Omega _A}\elbows{\phi _x\rho\phi _x}\elbows{\phi _x\btilde\phi _x}
\hskip 2pc\qedsymbol
\end{equation*}
\end{exam}

\begin{exam}{4}  
If $A=\brac{A_x\colon x\in\Omega _A}$ is an observable and $\alpha _x\in\sscript (H)$, $x\in\Omega _A$, we define the \textit{Holevo instrument}
$\hscript _x^{(A,\alpha )}(\rho )=\trace (\rho A_x)\alpha _x$ \cite{hol82,hol94}. Then $\hscript ^{(A,\alpha)}$ measures $A$ because
\begin{equation*}
\trace\sqbrac{\hscript _x^{(A,\alpha )}(\rho )}=\trace\sqbrac{\trace (\rho A_x)\alpha _x}=\trace (\rho A_x)\trace (\alpha _x)=\trace (\rho A_x)
\end{equation*}
Also, the dual of $\hscript ^{(A,\alpha )}$ becomes
\begin{align*}
\hscript _x^{(A,\alpha )*}(a)&=\trace (\alpha _xa)A_x
\intertext{and}
(a\mid A)&=\hscript _{\Omega _A}^{(A,\alpha )*}(a)=\sum _{x\in\Omega _A}\hscript ^{(A,\alpha )*}(a)=\sum _{x\in\Omega _A}\trace (\alpha _xa)A_x
\end{align*}
Then Bayes' quantum first rule becomes
\begin{equation*}
\sum _{x\in\Omega _A}P_\rho (A_x)P_\rho (a\mid A_x)=P_\rho (a\mid A)=\sum _{x\in\Omega _A}\trace (\rho A_x)\trace (\alpha _xa)
\end{equation*}
Moreover, if $B=\brac{B_y\colon y\in\Omega _B}$ is a real-valued observable, then
\begin{align*}
E_\rho (B\mid A)&=\sum _{y\in\Omega _B}y\trace\sqbrac{\,\overline{\hscript ^{(A,\alpha )}}(\rho )B_y}
    =\sum _{y\in\Omega _B}y\trace\sqbrac{\sum _{x\in\Omega _A}\hscript _x^{(A,\alpha )}(\rho )B_y}\\
    &=\sum _{y\in\Omega _B}y\trace\sqbrac{\sum _{x\in\Omega _A}\trace (\rho A_x)\alpha _xB_y}=\sum _{x\in\Omega _A}\trace (\rho A_x)\trace (\alpha _x\btilde )
\end{align*}
Bayes' quantum first rule for expectations becomes
\begin{equation*}
\sum _{x\in\Omega _A}P_\rho (A_x)E_\rho (B\mid A_x)=\sum _{x\in\Omega _A}\trace (\rho A_x)\trace (\alpha _x\btilde )\hskip 6pc\qedsymbol
\end{equation*}
\end{exam}

\begin{exam}{5}  
If $\hscript ^{(A,\alpha )}$ and $\hscript ^{(B,\beta )}$ are Holevo instruments, we show that
\begin{equation*}
\hscript ^{(A,\alpha )}\circ\hscript ^{(B,\beta )}=\hscript ^{(C,\beta )}
\end{equation*}
is the Holevo instrument with $C_{(x,y)}=\trace (\alpha _yB_y)A_x$. Indeed
\begin{align*}
\paren{\hscript ^{(A,\alpha )}\circ\hscript ^{(B,\beta )}}_{(x,y)}(\rho )&=\hscript _y^{(B,\beta )}(\hscript _x^{(A,\alpha )})(\rho )
    =\hscript _y^{(B,\beta )}\sqbrac{\trace (\rho A_x)\alpha _x}\\
    &=\trace (\rho A_x)\trace (\alpha _xB_y)\beta _y=\trace\sqbrac{\rho\trace (\alpha _xB_y)A_x}\beta _y\\
    &=\trace (\rho C_{(x,y)})\beta _y=\hscript _{(x,y)}^{(C,\beta )}(\rho )
\end{align*}
In contrast, if $\lscript ^A$, $\lscript ^B$ are L\"uders instruments $\lscript _x^{(A)}(\rho )=A_x^{1/2}\rho A_x^{1/2}$, $\lscript _y^{(B)}=B_y^{1/2}\rho B_y^{1/2}$, we show that $\lscript ^A\circ\lscript ^B$ is not L\"uders, in general. Indeed, suppose $\lscript ^A\circ\lscript ^B=\lscript ^C$. We then obtain
\begin{equation*}
(\lscript ^A\circ\lscript ^B)_{(x,y)}(\rho )=\lscript _y^B(\lscript _x^A(\rho ))=B_y^{1/2}A_x^{1/2}\rho A_x^{1/2}B_y^{1/2}=C_{(x,y)}^{1/2}\rho C_{(x,y)}^{1/2}
\end{equation*}
for all $\rho\in\sscript (H)$. Taking the trace of both sides gives $C_{(x,y)}=A_x^{1/2}B_yA_x^{1/2}$ and we conclude that
\begin{equation*}
B_y^{1/2}A_x^{1/2}\rho A_x^{1/2}B_y^{1/2}=(A_x^{1/2}B_yA_x^{1/2})^{1/2}\rho (A_x^{1/2}B_yA_x^{1/2})^{1/2}
\end{equation*}
for all $\rho\in\sscript (H)$. Letting $\rho =I/n$ where $n=\dim H$ gives
\begin{equation*}
B_y^{1/2}A_xB_y^{1/2}=A_x^{1/2}B_yA_x^{1/2}
\end{equation*}
This holds if and only if $A_xB_y=B_yA_x$, in which case $(\lscript ^A\circ\lscript ^B)_{(x,y)}=\lscript ^{A_xB_y}$ for every $x\in\Omega _A$, $y\in\Omega _B$. In a similar way, if $a,b\in\escript (H)$, then
\begin{equation*}
\hscript ^{(a,\alpha )}\circ\hscript ^{(b,\beta )}=\hscript ^{\paren{\trace (\alpha b)a,\beta}}
\end{equation*}
and $\lscript ^a\circ\lscript ^b$ is not L\"uders unless $ab=ba$ in which case $\lscript ^a\circ\lscript ^b=\lscript ^{ab}$.\hfill\qedsymbol
\end{exam}

We say that an observable $A=\brac{A_x\colon x\in\Omega _A}$ is \textit{commuting} if $A_xA_y=A_yA_x$ for all $x,y\in\Omega _A$. Also, two observables $B,C$ are
\textit{jointly commuting} if $B$ and $C$ are commuting and $B_xC_y=C_yB_x$ for all $x\in\Omega _B$, $y\in\Omega _C$.

\begin{thm}    
\label{thm24}
Two observables $B$, $C$ are jointly commuting if and only if there exists an atomic observable $A$ and observables $B_1,C_1$, such that $B=(B_1\mid A)$,
$C=(C_1\mid A)$ relative to some instrument that measures $A$.
\end{thm}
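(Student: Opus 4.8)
The plan is to prove both implications directly, the crux being that an atomic observable $A=\brac{P_x\colon x\in\Omega_A}$ is essentially an orthonormal basis $\brac{\phi_x}$ of $H$ recorded as the rank-one projections $P_x=\ket{\phi_x}\bra{\phi_x}$: since the $P_x$ are one-dimensional projections summing to $I$, pairing with each $\phi_x$ forces $\elbows{\phi_x,\phi_{x'}}=0$ for $x\ne x'$, so they are mutually orthogonal and $\brac{\phi_x}$ is an orthonormal basis. The whole argument then turns on the observation that conditioning through $A$ produces only operators that are diagonal in this basis.

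For the direction ($\Leftarrow$), suppose $A$ is atomic, $\iscript$ is an instrument measuring $A$, and $B=(B_1\mid A)$, $C=(C_1\mid A)$. Each component $\iscript_x$ measures the atomic effect $P_x$, so Lemma~\ref{lem11}(iii) gives $\iscript_x^*(b)=\lambda_x(b)P_x$ for every $b\in\escript(H)$, with $\lambda_x(b)\in\sqbrac{0,1}$. Hence
\begin{equation*}
B_y=(B_1\mid A)_y=\sum_{x\in\Omega_A}\iscript_x^*\bigl((B_1)_y\bigr)=\sum_{x\in\Omega_A}\lambda_x\bigl((B_1)_y\bigr)P_x,
\end{equation*}
and likewise each $C_z$ is a linear combination of the $P_x$. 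Since $P_xP_{x'}=\delta_{xx'}P_x$, any two such operators commute, so $B$ and $C$ are jointly commuting. I would stress that the particular instrument is irrelevant here: atomicity of $A$ alone forces the diagonal structure via Lemma~\ref{lem11}(iii), and this is exactly what licenses the phrase ``some instrument'' in the statement.

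For the direction ($\Rightarrow$), suppose $B$ and $C$ are jointly commuting. Then $\brac{B_y}\cup\brac{C_z}$ is a commuting family of self-adjoint operators, so by the spectral theorem they are simultaneously diagonalizable: there is an orthonormal basis $\brac{\phi_x\colon x\in\Omega_A}$ of $H$ in which every $B_y$ and every $C_z$ is diagonal. Take $A=\brac{P_x=\ket{\phi_x}\bra{\phi_x}}$ with the L\"uders instrument $\iscript_x(\rho)=P_x\rho P_x$ measuring $A$ (as in Example~3), and make the tautological choice $B_1=B$, $C_1=C$. Since $\iscript_x^*(a)=P_xaP_x=\elbows{\phi_x,a\phi_x}P_x$, we obtain
\begin{equation*}
(B_1\mid A)_y=\sum_{x\in\Omega_A}\elbows{\phi_x,B_y\phi_x}P_x=B_y,
\end{equation*}
the final equality holding because $B_y$ is already diagonal in $\brac{\phi_x}$; the same computation gives $(C_1\mid A)=C$, yielding the required representation.

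I do not expect a serious obstacle once this geometric picture is in place; the argument is short in both directions. The two points that genuinely require care are the verification that a rank-one projection-valued resolution of $I$ is an orthonormal basis (needed for mutual commutativity of the $P_x$ in the forward direction), and the appeal to Lemma~\ref{lem11}(iii) to secure the diagonal form for an \emph{arbitrary} instrument measuring $A$ rather than only the L\"uders one. The reverse direction reduces entirely to simultaneous diagonalization of a commuting self-adjoint family, after which the choice $B_1=B$, $C_1=C$ closes the proof.
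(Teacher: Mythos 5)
Your proof is correct and takes essentially the same route as the paper's: Lemma~\ref{lem11}(iii) forces $\iscript _x^*(b)=\lambda P_x$ for any instrument measuring the atomic observable $A$ in one direction, and simultaneous diagonalization of the commuting family plus the L\"uders instrument with the choice $B_1=B$, $C_1=C$ handles the converse. The only difference is your explicit verification that rank-one projections summing to $I$ are mutually orthogonal, a point the paper leaves implicit.
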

\begin{proof}
If $B=(B_1\mid A)$, then $B_y=\sum _{x\in\Omega _A}\iscript _x^*(B_1y)$ and by Lemma~\ref{lem11}(iii) $\iscript _x^*(B_1y)=\lambda _{x,y}A_x$ for
$\lambda _{x,y}\in\sqbrac{0,1}$. Hence, $B_y=\sum _{x\in\Omega _A}\lambda _{x,y}A_x$. In a similar way, $C_z=\sum _{x\in\Omega _A}\mu _{x,z}A_x$ for
$\mu _{x,z}\in\sqbrac{0,1}$. It follows that $B$ and $C$ are jointly commuting. Conversely, if $B$ and $C$ are jointly commuting, then all the effects in
$\brac{B_y,C_z\colon y\in\Omega _B,z\in\Omega _C}$ commute so they are simultaneously diagonalizable. Hence, there exists an atomic observable $A$ such that
$B_y=\sum _{x\in\Omega _A}\trace (A_xB_y)A_x$ and $C_z=\sum _{x\in\Omega _A}\trace (A_xC_z)A_x$ for all $y\in\Omega _B$, $z\in\Omega _C$. Using the
L\"uders instrument $\lscript _x^A\rho =A_x\rho A_x$ we have
\begin{equation*}
(B\mid A)_y=\sum _{x\in\Omega _A}\lscript _x^{A^*}B_y=\sum _{x\in\Omega _A}A_xB_yA_x=\sum _{x\in\Omega _A}\trace (A_xB_y)A_x=B_y
\end{equation*}
Similarly, $(C\mid A)_z=C_z$ so $B=(B\mid A)$ and $C=(C\mid A)$.
\end{proof}

\begin{cor}    
\label{cor25}
Observables $B,C$ are jointly commuting if and only if there exists an atomic observable $A$ such that $B=(B\mid A)$, $C=(C\mid A)$ relative to some instrument that measures $A$.
\end{cor}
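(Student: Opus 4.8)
The plan is to obtain this corollary directly from Theorem~\ref{thm24}, and in particular from the explicit construction carried out inside its proof. The two implications call for different observations: the forward direction is a special case of Theorem~\ref{thm24}, while the reverse direction is already established, in the sharper form needed here, within the converse argument of Theorem~\ref{thm24}.

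For the ``if'' direction, suppose there is an atomic observable $A$ and an instrument $\iscript$ measuring $A$ with $B=(B\mid A)$ and $C=(C\mid A)$. I would simply take $B_1=B$ and $C_1=C$ in Theorem~\ref{thm24}; its hypotheses are then met, and the theorem delivers at once that $B$ and $C$ are jointly commuting. No additional computation is required.

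For the ``only if'' direction, suppose $B$ and $C$ are jointly commuting. Rather than treat Theorem~\ref{thm24} as a black box, I would appeal to the explicit construction in its converse: the commuting family $\brac{B_y,C_z}$ is simultaneously diagonalizable, yielding an atomic observable $A=\brac{A_x}$ with $B_y=\sum_{x}\trace(A_xB_y)A_x$ and $C_z=\sum_{x}\trace(A_xC_z)A_x$, and the L\"uders instrument $\lscript_x^A\rho=A_x\rho A_x$ then gives $(B\mid A)_y=B_y$ and $(C\mid A)_z=C_z$. Hence $B=(B\mid A)$ and $C=(C\mid A)$, exactly as claimed.

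The one point to emphasize—and the sole feature distinguishing the corollary from the theorem—is that the converse of Theorem~\ref{thm24} does not merely produce \emph{some} auxiliary observables $B_1,C_1$: the same construction already furnishes the identities with $B_1=B$ and $C_1=C$ themselves. Consequently there is no genuine obstacle here; the substantive content, namely the simultaneous diagonalization and the verification through the L\"uders instrument, has already been supplied, and the corollary follows by simply reading off that stronger conclusion.
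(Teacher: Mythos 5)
Your proposal is correct and matches the paper's intent exactly: the paper states this corollary without a separate proof precisely because, as you observe, the converse half of the proof of Theorem~\ref{thm24} already produces $A$ with $B=(B\mid A)$ and $C=(C\mid A)$ (not merely auxiliary $B_1,C_1$), while the forward direction is the special case $B_1=B$, $C_1=C$ of the theorem's statement.
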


A similar proof gives the following.

\begin{thm}    
\label{thm26}
The following statements are equivalent.
{\rm{(i)}}\enspace An observable $B$ is commuting.
{\rm{(ii)}}\enspace There exists an atomic observable $A$ such that $B=(B\mid A)$.
{\rm{(iii)}}\enspace There exists an observable $C$ and an atomic observable $A$ such that $B=(C\mid A)$.
\end{thm}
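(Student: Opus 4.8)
The plan is to prove the cyclic chain of implications (ii)$\Rightarrow$(iii)$\Rightarrow$(i)$\Rightarrow$(ii), which yields the full equivalence. The argument will run parallel to the proof of Theorem~\ref{thm24}, with Lemma~\ref{lem11}(iii) again supplying the structural input.

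I would dispatch (ii)$\Rightarrow$(iii) in one line: if $B=(B\mid A)$ for some atomic observable $A$, then taking the observable $C=B$ exhibits exactly the data demanded by (iii). For (iii)$\Rightarrow$(i), I would start from $B=(C\mid A)$ relative to an instrument $\iscript$ measuring the atomic observable $A$, so that $B_y=\sum _{x\in\Omega _A}\iscript _x^*(C_y)$. Since each $\iscript _x$ measures the atomic effect $A_x$, Lemma~\ref{lem11}(iii) forces $\iscript _x^*(C_y)=\lambda _{x,y}A_x$ with $\lambda _{x,y}\in\sqbrac{0,1}$, whence $B_y=\sum _{x\in\Omega _A}\lambda _{x,y}A_x$. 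Because $A$ is atomic with $\sum _xA_x=I$, its effects are mutually orthogonal rank-one projections (taking traces shows $\ab{\Omega _A}=\dim H$, and the resolution of the identity then forces $\elbows{\phi _x,\phi _y}=0$ for $x\ne y$), so they pairwise commute. Each $B_y$ is then a real combination of a single commuting family of projections, so any two $B_y,B_{y'}$ commute and $B$ is commuting.

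For (i)$\Rightarrow$(ii), I would assume $B$ commuting, so that the self-adjoint family $\brac{B_y\colon y\in\Omega _B}$ is simultaneously diagonalizable: there is an orthonormal basis $\brac{\phi _x}$ of $H$ in which every $B_y$ is diagonal. Setting $A_x=\ket{\phi _x}\bra{\phi _x}$ gives an atomic observable $A$ with $B_y=\sum _x\elbows{\phi _x,B_y\phi _x}A_x=\sum _x\trace (A_xB_y)A_x$. Measuring $A$ with the L\"uders instrument $\lscript _x^A\rho =A_x\rho A_x$ and using $A_xB_yA_x=\trace (A_xB_y)A_x$ for rank-one $A_x$, I would compute $(B\mid A)_y=\sum _xA_xB_yA_x=\sum _x\trace (A_xB_y)A_x=B_y$, so that $B=(B\mid A)$, establishing (ii).

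The computations are routine once the two structural ingredients are in hand, so I do not expect a serious obstacle. The one point meriting care — and the reason the hypothesis \emph{atomic} cannot be weakened to \emph{sharp} in (iii)$\Rightarrow$(i) — is the observation that the effects of an atomic observable are automatically pairwise orthogonal, and hence commute. This is precisely what upgrades the expansion $B_y=\sum _x\lambda _{x,y}A_x$ from a mere linear combination into membership in one commuting family, and it is the hinge of the (iii)$\Rightarrow$(i) step.
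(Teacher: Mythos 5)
Your proof is correct and takes essentially the paper's own route: the paper establishes Theorem~\ref{thm26} simply by remarking that the proof of Theorem~\ref{thm24} adapts, and your three implications reproduce it exactly --- Lemma~\ref{lem11}(iii) giving $B_y=\sum _{x}\lambda _{x,y}A_x$ for (iii)$\Rightarrow$(i), simultaneous diagonalization plus the L\"uders instrument for (i)$\Rightarrow$(ii), and the trivial choice $C=B$ for (ii)$\Rightarrow$(iii), with your explicit check that the effects of an atomic observable are mutually orthogonal being a detail the paper leaves implicit. One quibble with your closing commentary only (not with the proof): pairwise orthogonality and commutativity of the $A_x$ hold just as well for any \emph{sharp} observable, so the real reason atomicity cannot be weakened to sharpness in (iii)$\Rightarrow$(i) is Lemma~\ref{lem11}(iii) itself, which needs $A_x$ atomic to force $\iscript _x^*(C_y)=\lambda _{x,y}A_x$; for merely sharp $A_x$ one only gets $\iscript _x^*(C_y)\le A_x$, which is not enough (e.g.\ $A=\brac{I}$ measured by the identity instrument gives $(C\mid A)=C$ for an arbitrary, possibly noncommuting, $C$).
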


\section{Uncertainty Principle and Entropy}  
Let $B$ be a real-valued observable with stochastic operator $\btilde =\sum _{y\in\Omega _B}yB_y$. We have seen that $E_\rho (B)=\trace (\rho\btilde )$. Also, if $A$ is an arbitrary observable and the instrument $\iscript$ measures $A$, then relative to $\iscript$ we have $E_\rho (B\mid A)=\trace\sqbrac{\iscriptbar (\rho )\btilde}$. We call $E_\rho (B\mid A)$ the $\rho$-expectation of $B$ in \textit{context} $A$. If $A,B,C$ are observables and $B,C$ are real-valued, we define the
$\rho$-\textit{correlation of} $B$ \textit{and} $C$ \textit{in the context} $A$ by \cite{gud123}
\begin{align*}
\rmcor (B,C\mid A)&=\trace\sqbrac{\rho (B\mid A)^\sim (C\mid A)^\sim}-E_\rho (B\mid A)E_\rho (C\mid A)\\
    &=\trace\sqbrac{\rho (B\mid A)^\sim (C\mid A)^\sim}-\trace\sqbrac{\iscriptbar (\rho )\btilde}\trace\sqbrac{\iscriptbar (\rho )\ctilde}
\end{align*}
Although $\rmcor _\rho (B,C\mid A)$ need not be a real number, it is easy to check that
\begin{equation*}
\overline{\rmcor _\rho(B,C\mid A)}=\rmcor _\rho (C,B\mid A)
\end{equation*}
We call $\Delta _\rho (B,C\mid A)=\rmre\rmcor _\rho (B,C\mid A)$ the $\rho$-\textit{covariance of} $B$ \textit{and} $C$ \textit{in the context} $A$ \cite{gud123}.
We define the $\rho$-\textit{variance of} $B$ \textit{in the context of} $A$ \cite{gud123}
\begin{align*}
\Delta _\rho (B\mid A)&=\rmcor _\rho (B,B\mid A)=\Delta _\rho (B,B\mid A)\\
   &=\trace\brac{\rho\sqbrac{(B\mid A)^\sim}^2}-\brac{\trace\sqbrac{\iscriptbar (\rho )\btilde}}^2
\end{align*}
Defining the \textit{commutator} of $(B\mid A)^\sim$ with $(C\mid A)^\sim$ by
\begin{equation*}
\sqbrac{(B\mid A)^\sim ,(C\mid A)^\sim}=(B\mid A)^\sim (C\mid A)^\sim - (C\mid A)^\sim (B\mid A)^\sim
\end{equation*}
we obtain the \textit{uncertainty principal} \cite{gud123}:
\begin{align}                
\label{eq31}
\tfrac{1}{4}\ab{\trace\paren{\rho\sqbrac{(B\mid A)^\sim ,(C\mid A)^\sim}}}^2+\sqbrac{\Delta _\rho (B,C\mid A)}^2&=\ab{\rmcor _\rho (B,C\mid A)}^2\\
   &\le\Delta _\rho (B\mid A)\Delta _\rho (C\mid A)\notag
\end{align}
The variance $\Delta _\rho (B\mid A)$ gives the amount of uncertainty or lack of information about $B$ provided by $\rho$ relative to a first measurement of $A$. The less $\Delta _\rho (B\mid A)$ is, the more information $\rho$ provides about $B$. Equation~\eqref{eq31} gives a lower bound for the product of the uncertainties. Notice that \eqref{eq31} generalizes the usual uncertainty principle.

\begin{exam}{6}  
Suppose $A$ is sharp in which case $A_xA_{x'}=\delta _{xx'}$ for all $x,x'\in\Omega _{A}$. Let us measure $A$ with the L\"uders instrument
$\iscript _x(\rho )=A_x\rho A_x$. We can now compute the various statistical quantities more completely. To simplify the notation we write $D_x=A_xDA_x$ for $D\in\lscript (H)$. We then have $\iscriptbar (\rho )=\sum\limits _{x\in\Omega _A}\rho _x$ and
\begin{align*}
E_\rho (B\mid A)&=\trace\sqbrac{\iscriptbar (\rho )\btilde}=\sum _{x,y}y\trace (\rho _xB_y)=\sum _x\trace (\rho _x\btilde )\\
    (B\mid A)^\sim&=\sum _yy(B\mid A)_y=\sum _x\btilde _x
\end{align*}
We then obtain
\begin{align*}
\rmcor _\rho (B,C\mid A)&=\trace\paren{\rho\sum _x\btilde _x\sum _{x'}\ctilde _{x'}}-\trace\sqbrac{\iscriptbar (\rho )\btilde}\trace\sqbrac{\iscriptbar (\rho )\ctilde}\\
   &=\sum _x\trace (\rho _x\btilde A_x\ctilde -\sum _{x,x'}\trace (\rho _x\btilde )\trace (\rho _{x'}\ctilde )
\end{align*}
\begin{align*}
\Delta _\rho (B,C\mid A)&=\rmre\rmcor _\rho (B,C\mid A)\\
   &=\tfrac{1}{2}\sqbrac{\rmcor _\rho (B,C\mid A)+\rmcor _\rho (C,B\mid A)}\\
   &=\tfrac{1}{2}\sum _x\trace\sqbrac{\rho _x(\btilde A_x\ctilde +\ctilde A_x\btilde )}-\sum _{x,x'}\trace(\rho _x\btilde )\trace (\rho _{x'}\ctilde )\\
\end{align*}
\begin{equation*}
\Delta _\rho (B\mid A)=\sum _x\trace\sqbrac{\rho (\btilde _x)^2}-\sqbrac{\sum _x\trace (\rho _x\btilde )}^2
\end{equation*}
\begin{equation*}
\Delta _\rho (C\mid A)=\sum _x\trace\sqbrac{\rho (\ctilde _x)^2}=\sqbrac{\sum _x\trace (\rho _x\ctilde )}^2
\end{equation*}
Finally, the commutator term becomes
\begin{align*}
\trace\brac{\rho\sqbrac{(B\mid A)^\sim ,(C\mid A)^\sim}}&=\trace\paren{\rho\sqbrac{\sum _x\btilde _x,\sum _{x'}\ctilde _{x'}}}\\
    &=\trace\sqbrac{\rho\sum _x(\btilde _x\ctilde _x-\ctilde _x\btilde _x)}\\
    &=\sum _x\sqbrac{\rho _x(\btilde A_x\ctilde -\ctilde A_x\btilde )}
\end{align*}
Substituting these terms into \eqref{eq31} gives the uncertainty principle for this case.\hskip 27pc\qedsymbol
\end{exam}

\begin{exam}{7}  
Suppose $A$ is measured by the Holevo instrument $\hscript _x^{(A,\alpha )}(\rho )=\trace (\rho A_x)\alpha _x$. Then
\begin{equation*}
\overline{\hscript ^{(A,\alpha )}}(\rho )=\sum _{x\in\Omega _A}\trace (\rho A_x)\alpha _x
\end{equation*}
and we saw in Example~5 that
\begin{equation*}
E_\rho (B\mid A)=\sum _x\trace (\rho A_x)\trace (\alpha _x\btilde )
\end{equation*}
Since $\hscript _x^{(A,\alpha )*}(B_y)=\trace (\alpha _xB_y)A_x$ we obtain
\begin{align*}
(B\mid A)^\sim&=\sum _yy(B\mid A)=\sum _yy\sum _x\hscript ^{(A,x)^A}(B_y)=\sum _yy\sum _x\trace (\alpha _xB_y)A_x\\
   &=\sum _x\trace (\alpha _x\btilde )A_x
\end{align*}
It follows that
\begin{align*}
\rmcor _\rho (B,C\mid A)&=\trace\sqbrac{\rho\sum _x\trace (\alpha _x\btilde )A_x\sum _{x'}\trace (\alpha _{x'}\ctilde )A_{x'}}\\
   &\quad -\sqbrac{\sum _x\trace (\rho A_x)\trace (\alpha _x\btilde )}\sqbrac{\sum _x\trace (\rho A_x)\trace (\alpha _x\ctilde )}\\
   &=\sum _{x,x'}\trace (\alpha\btilde )\trace (\alpha _{x'}\ctilde )\sqbrac{\trace (\rho A_xA_{x'})-\trace (\rho A_x)\trace (\rho A_{x'})}
\end{align*}
\begin{align*}
\Delta _\rho&(B,C\mid A)=\rmre\rmcor _\rho (B,C\mid A)\\
    &=\sum _{x,x'}\trace (\alpha _x\btilde )\trace (\alpha _{x'}\ctilde )\brac{\tfrac{1}{2}\sqbrac{\trace\paren{\rho (A_xA_{x'}+A_{x'}A_x)}}-\trace (\rho A_x)\trace (A_{x'})}
\end{align*}
\begin{equation*}
\Delta _\rho (B\mid A)=\sum _{x,x'}\trace (\alpha _x\btilde )\trace (\alpha _{x'}\btilde )\sqbrac{\trace (\rho A_xA_{x'})-\trace (\rho A_x)\trace (\rho A_{x'})}
\end{equation*}
with a similar formula for $\Delta _\rho (C\mid A)$. Finally, the commutator term becomes
\begin{align*}
\trace\brac{\rho\sqbrac{(B\mid A)^\sim ,(C\mid A)^\sim}}&=\trace\brac{\rho\sqbrac{\sum _x\trace (\alpha _x\btilde )A_x,\sum _{x'}\trace (\alpha _{x'}\ctilde )A_{x'}}}\\
   &=\sum _{x,x'}\trace (\alpha _x\btilde )\trace (\alpha _{x'}\ctilde )\trace\paren{\rho\sqbrac{A_x,A_{x'}}}
\end{align*}
Substituting these terms into \eqref{eq31} gives the uncertainty principle for this case.\hskip 27pc\qedsymbol
\end{exam}

The uncertainty $\Delta _\rho (A)$ measures the lack of information about $A$ provided by the state $\rho$. In the dual picture, we have the lack of information $S_A(\rho )$ that a measurement of $A$ provides about the state $\rho$ and this is called entropy. We now briefly discuss conditional entropy. If $a\in\escript (H)$,
$\rho\in\sscript (H)$, we define the $\rho$-\textit{entropy} of $a$ by \cite{gud22,op04,st22,weh78}
\begin{equation*}
S_a(\rho )=-\trace (\rho a)\ln\sqbrac{\frac{\trace (\rho A)}{\trace (a)}}
\end{equation*}
We interpret $S_a (\rho )$ as the amount of uncertainty that a measurement of $a$ provides about $\rho$. The smaller $S_a(\rho )$ is, the more information a measurement of $a$ gives about $\rho$. It follows that if $\iscript$ measures $a$, then
\begin{align*}
S_{a\sqbrac{\iscript}b}(\rho )&=-\trace\paren{\rho a\sqbrac{\iscript}b}\ln\sqbrac{\frac{\trace\paren{\rho a\sqbrac{\iscript}b}}{\trace\paren{a\sqbrac{\iscript}b}}}\\
   &=-\trace\sqbrac{\rho\iscript ^*(b)}\ln\sqbrac{\frac{\trace\sqbrac{\rho\iscript ^*(b)}}{\trace\sqbrac{\iscript ^*(b)}}}
   =-\trace\sqbrac{\iscript (\rho )b}\ln\sqbrac{\frac{\trace\sqbrac{\iscript (\rho )b}}{\trace\sqbrac{\iscript ^*(b)}}}
\end{align*}
We define the $a$-\textit{conditional} $\rho$-\textit{entropy of} $b$ as
\begin{equation*}
S_{(b\mid a)}(\rho )=S_b\sqbrac{\iscript (\rho )}=-\trace\sqbrac{\iscript (\rho )b}\ln\sqbrac{\frac{\trace\sqbrac{\iscript (\rho )b}}{\trace (b)}}
\end{equation*}
Notice that there is a close connection between these two entropies. Since $ln x$ is an increasing function we have the following.

\begin{lem}    
\label{lem31}
$S_{a\sqbrac{\iscript}b}(\rho )\le S_{(b\mid a)}(\rho )$ for every $\rho\in\sscript (H)$ if and only if $\trace\sqbrac{\iscript ^*(b)}\le\trace (b)$.
\end{lem}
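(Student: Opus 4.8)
The plan is to reduce both entropies to a common form in a single nonnegative scalar and then read off the equivalence from the monotonicity of $\ln$. First I would unwind the definitions. Set $t=t(\rho )=\trace\sqbrac{\iscript (\rho )b}=\trace\sqbrac{\rho\iscript ^*(b)}$, the two forms agreeing by duality, and abbreviate $p=\trace\sqbrac{\iscript ^*(b)}$ and $q=\trace (b)$. Since $a\sqbrac{\iscript}b=\iscript ^*(b)$, the definition of the $\rho$-entropy gives $S_{a\sqbrac{\iscript}b}(\rho )=-t\ln (t/p)$, while $S_{(b\mid a)}(\rho )=S_b\sqbrac{\iscript (\rho )}=-t\ln (t/q)$. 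Note that $t\ge 0$, $p\ge 0$ and $q\ge 0$, since $\iscript (\rho )$, $\iscript ^*(b)$ and $b$ are all positive.

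Next I would compute the difference. The two quantities differ only in the denominator inside the logarithm, so
\[
S_{(b\mid a)}(\rho )-S_{a\sqbrac{\iscript}b}(\rho )=t\sqbrac{\ln (t/p)-\ln (t/q)}=t\ln (q/p)=t\ln\frac{\trace (b)}{\trace\sqbrac{\iscript ^*(b)}},
\]
valid whenever $t>0$, the difference being $0$ when $t=0$ (using the convention $0\ln 0=0$). Because $t\ge 0$ and $\ln$ is increasing, the sign of the right-hand side is governed entirely by whether $q\ge p$.

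This yields both directions at once. For the ``if'' direction, suppose $\trace\sqbrac{\iscript ^*(b)}\le\trace (b)$, i.e.\ $p\le q$. If $p=0$ then $\iscript ^*(b)=0$, so $t\equiv 0$ and both entropies vanish identically; otherwise $\ln (q/p)\ge 0$, and multiplying by the nonnegative scalar $t$ shows $S_{a\sqbrac{\iscript}b}(\rho )\le S_{(b\mid a)}(\rho )$ for every $\rho$. For the converse I would argue by contraposition: if $\trace\sqbrac{\iscript ^*(b)}>\trace (b)$ then $p>q\ge 0$, so $p>0$ and $\iscript ^*(b)$ is a nonzero positive effect. Choosing $\rho$ to be a one-dimensional projection onto an eigenvector of $\iscript ^*(b)$ with positive eigenvalue forces $t=\trace\sqbrac{\rho\iscript ^*(b)}>0$, and then $t\ln (q/p)<0$, so $S_{a\sqbrac{\iscript}b}(\rho )>S_{(b\mid a)}(\rho )$, violating the inequality.

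The substance is entirely in the one-line difference formula; the remaining points are bookkeeping rather than obstacle. One must fix the convention $0\ln 0=0$ so the entropies are defined when $t=0$, and sidestep the ill-defined $\ln (q/p)$ in the degenerate case $p=0$, which is handled separately as above. The only genuinely non-automatic step is the converse, where one must actually produce a state with $t>0$; this is precisely where the strict inequality $\trace\sqbrac{\iscript ^*(b)}>\trace (b)$ enters, guaranteeing $\iscript ^*(b)\neq 0$. Everything else is an immediate consequence of the monotonicity of the logarithm.
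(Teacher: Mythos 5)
Your proof is correct and takes essentially the same route as the paper: the paper derives exactly your two formulas, $S_{a\sqbrac{\iscript}b}(\rho )=-\trace\sqbrac{\iscript (\rho )b}\ln\paren{\trace\sqbrac{\iscript (\rho )b}/\trace\sqbrac{\iscript ^*(b)}}$ and $S_{(b\mid a)}(\rho )=-\trace\sqbrac{\iscript (\rho )b}\ln\paren{\trace\sqbrac{\iscript (\rho )b}/\trace (b)}$, and then presents the lemma as an immediate consequence of $\ln$ being increasing. Your extra bookkeeping (the convention $0\ln 0=0$, the degenerate case $\iscript ^*(b)=0$, and the explicit eigenvector state giving $t>0$ for the converse) merely fills in details the paper leaves implicit.
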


\begin{exam}{8}  
If $a$ is measured by the L\"uders operation $\lscript ^{(a)}(\rho )=a^{1/2}\rho a^{1/2}$, then
\begin{equation*}
\trace\sqbrac{\iscript ^*(b)}=\trace (a^{1/2}ba^{1/2})=\trace (ab)\le\trace (b)
\end{equation*}
so in this we have $S_{a\sqbrac{\iscript}b}(\rho )\le S_{(b\mid a)}(\rho )$ for all $\rho\in\sscript (H)$\hskip 7pc\qedsymbol
\end{exam}

\begin{exam}{9}  
If $a$ is measured by the Holevo operation $\hscript ^{(a,\alpha )}(\rho )=\trace (\rho a)\alpha$, then
\begin{equation*}
\trace\sqbrac{\hscript ^{(a,\alpha )*}(b)}=\trace\sqbrac{\trace (\alpha b)a}=\trace (\alpha b)\trace (a)
\end{equation*}
Hence, $\trace\sqbrac{\hscript ^{(a,\alpha )*}(b)}\le\trace (b)$ if and only if $\trace (\alpha b)\trace (a)\le\trace (b)$. Depending on $a,b,\alpha$ this inequality sometimes holds and sometimes does not hold.. We conclude that $S_{a\sqbrac{\iscript}b}$ and $S_{(a\mid b)}$ give different measures of information about $\rho$.
\hskip 25pc\qedsymbol
\end{exam}

If $A$ is an observable, we define the $\rho$-\textit{entropy} of $A$ by \cite{gud22,st22}
\begin{equation*}
S_A(\rho )=\sum _{x\in\Omega _a}S_{A_x}(\rho )=-\sum _{x\in\Omega _A}\trace (\rho A_x)\ln\sqbrac{\frac{\trace (\rho A_x)}{\trace (A_x)}}
\end{equation*}
If $\iscript$ measures $A$, we define the $A$-\textit{conditional} $\rho$-\textit{entropy} of the observable $B$ by \cite{gud22,st22}
\begin{equation*}
S_{(B||A)}(\rho )=S_B\sqbrac{\,\iscriptbar (\rho )}=\sum _{y\in\Omega _B}S_{B_y}\sqbrac{\,\iscriptbar (\rho )}
\end{equation*}
As with effects, this can be compared with
\begin{equation*}
S_{(B\mid A)}(\rho )=\sum _{y\in\Omega _B}S_{(B\mid A)_y}(\rho )
\end{equation*}
and these are not related in general.

One of the advantages of $S_{(B||A)}$ over $S_{(B\mid A)}$ is the following. If $\iscript$ measures $A$ and $\jscript$ measures $B$ we obtain
\begin{equation*}
S_{\paren{(C||B)||A}}(\rho )=S_{(C||B)}\sqbrac{\,\iscriptbar (\rho )}=S_C\sqbrac{\,\jscriptbar\paren{\,\iscriptbar (\rho )}}=S_{\paren{C||(B||A)}}
\end{equation*}
but in general
\begin{equation*}
S_{\paren{(C\mid B)\mid A}}\ne S_{\paren{C\mid (B\mid A)}}
\end{equation*}
We can continue this to obtain results concerning more than three observables.

\end{document}